\def\BibTeX{{\rm B\kern-.05em{\sc i\kern-.025em b}\kern-.08em
    T\kern-.1667em\lower.7ex\hbox{E}\kern-.125emX}}
\theoremstyle{plain}
\newcounter{subeqn} %
\newtheorem{theorem}{Theorem}
\newtheorem{lemma}[theorem]{Lemma}
\def\CD{{\mathcal D}}
\def\CM{{\mathcal M}}
\def\CN{{\mathcal N}}
\def\CS{{\mathcal S}}
\def\CJ{{\mathcal J}}
\def\CH{{\mathcal H}}
\def\CA{{\mathcal A}}
\def\CK{{\mathcal K}}
\def\CT{{\mathcal T}}
\def\CG{{\mathcal G}}
\def\CO{{\mathcal O}}
\DeclareMathOperator*{\argmax}{argmax}
\DeclareMathOperator*{\argmin}{argmin}
\begin{document}
\captionsetup[figure]{name={Fig.},labelsep=period}
\captionsetup{justification=centering}

%\title{Enhanced Meta-Critic Learning for Dynamic-Adaptive Resource Scheduling in LEO-B5G Systems}

%\title{Dynamic-Adaptive Resource Scheduling Based on Meta-Critic Learning for   LEO-B5G Systems}

\title{Adapting to Dynamic LEO-B5G Systems: Meta-Critic Learning Based  Efficient Resource Scheduling}

% author names and affiliations
% transmag papers use the long conference author name format.
\author[ ]{Yaxiong Yuan, \textit{Student Member, IEEE},~~Lei Lei, \textit{Member, IEEE}, Thang X. Vu,  \textit{Member, IEEE}, 
\newline 
Zheng Chang, \textit{Senior Member, IEEE}, Symeon Chatzinotas, \textit{Senior Member, IEEE}, and Sumei Sun, \textit{Fellow, IEEE}
\thanks{The work has been supported by the ERC project AGNOSTIC (742648), by the FNR CORE projects ROSETTA (C17/IS/11632107), ProCAST (C17/IS/11691338) and ASWELL (C19/IS/13718904), and by the FNR bilateral project LARGOS (12173206). (Corresponding author: Lei Lei.)}
\thanks{Yaxiong Yuan, Lei Lei, Thang X. Vu, and Symeon Chatzinotas are with the Interdisciplinary Centre for Security, Reliability and Trust, Luxembourg University, 1855 Kirchberg, Luxembourg (e-mail: yaxiong.yuan@uni.lu; lei.lei@uni.lu; thang.vu@uni.lu; symeon.chatzinotas@uni.lu).}
\thanks{Zheng Chang is with the School of Computer Science and Engineering, University of Electronic Science and Technology of China, Chengdu 610054, China, and also with the Faculty of Information Technology, University of Jyv\"{a}skyl\"{a}, FI-40014 Jyv\"{a}skyl\"{a}, Finland (e-mail: zheng.chang@jyu.fi).}   
\thanks{S. Sun is with the Institute for Infocomm Research, Agency for Science, Technology, and Research, Singapore 138632 (e-mail: sunsm@i2r.a-star.edu.sg).}
\thanks{This work has been submitted to the IEEE for possible publication.  Copyright may be transferred without notice, after which this version may no longer be accessible.}
}
%\affil[ ]{\small{Interdisciplinary Centre for Security, Reliability and Trust (SnT), University of Luxembourg, Luxembourg}}
%\affil[ ]{\em{{\small Emails: \{yaxiong.yuan; lei.lei; thang.vu; symeon.chatzinotas; bjorn.ottersten@uni.lu \} }}}

\IEEEtitleabstractindextext{%
\begin{abstract}
Low earth orbit (LEO) satellite-assisted communications have been considered as one of key elements in beyond 5G systems to provide wide coverage and cost-efficient data services. 
Such dynamic space-terrestrial topologies impose exponential increase in the degrees of freedom in network management. 
In this paper, we address two practical issues for an over-loaded LEO-terrestrial system. 
The first challenge is how to efficiently schedule resources to serve the massive number of connected users, such that more data and users can be delivered/served. 
The second challenge is how to make the algorithmic solution more resilient in adapting to dynamic wireless environments.
To address them, we first propose an iterative suboptimal algorithm to provide an offline benchmark.
To adapt to unforeseen variations, we propose an enhanced meta-critic learning algorithm (EMCL), where a hybrid neural network for parameterization and the Wolpertinger policy for action mapping are designed in EMCL. 
The results demonstrate EMCL's effectiveness and fast-response capabilities in over-loaded systems and in adapting to dynamic environments compare to previous actor-critic and meta-learning methods.

\end{abstract}
\begin{IEEEkeywords}
LEO satellites, resource scheduling, reinforcement learning, meta-critic learning, dynamic environment.
\end{IEEEkeywords}}
%
%
%% make the title area
\maketitle
 \thispagestyle{fancy} % IEEE模板在\maketitle后会自动声明\thispagestyle{plain}，
                            % 导致第一页什么都没有。所以得把plain更改为fancy
      \lhead{} % 页眉左，需要东西的话就在{}内添加
      \chead{} % 页眉中
      \rhead{} % 页眉右
      \lfoot{} % 页眉左
      \cfoot{} % 页眉中
      \rfoot{\thepage} %页眉右，\thepage 表示当前页码
      \renewcommand{\headrulewidth}{0pt} %改为0pt即可去掉页眉下面的横线
      \renewcommand{\footrulewidth}{0pt} %改为0pt即可去掉页脚上面的横线

 \pagestyle{fancy}
      \rfoot{\thepage}

%
%
%% To allow for easy dual compilation without having to reenter the
%% abstract/keywords data, the \IEEEtitleabstractindextext text will
%% not be used in maketitle, but will appear (i.e., to be "transported")
%% here as \IEEEdisplaynontitleabstractindextext when the compsoc 
%% or transmag modes are not selected <OR> if conference mode is selected 
%% - because all conference papers position the abstract like regular
%% papers do.
\IEEEdisplaynontitleabstractindextext
%% \IEEEdisplaynontitleabstractindextext has no effect when using
%% compsoc or transmag under a non-conference mode.
%
%
%
%
%
%
%
%% For peer review papers, you can put extra information on the cover
%% page as needed:
%% \ifCLASSOPTIONpeerreview
%% \begin{center} \bfseries EDICS Category: 3-BBND \end{center}
%% \fi
%%
%% For peerreview papers, this IEEEtran command inserts a page break and
%% creates the second title. It will be ignored for other modes.
%\IEEEpeerreviewmaketitle
%
%
%
\section{Introduction}
In beyond 5G networks (B5G), the massive number of connected users and their increasing demands for high-data-rate services can lead to overloading of terrestrial base stations (BSs), which in turn results in degraded user experience, e.g., longer delay in requesting data services or lower data rate \cite{LiLEO5G}.
In order to improve the network performance and user experience, the integration of satellites, e.g., low earth orbit (LEO) satellites, and terrestrial systems is considered as a promising solution to provide cost-efficient data services \cite{Oltjon}.
%SpaceX and OneWeb are aiming to launch thousands of LEO satellites to deploy a dense constellation to support seamless communications for remote areas or disaster scenarios and offload traffic from terrestrial networks \cite{Oltjon}.
%Specifically, LEO can directly serve the mobile terminals equipped with satellite antennas or provide backhaul for the terrestrial access nodes.
The solutions for terrestrial network optimization and resource management might not be suitable for direct application to integrated satellite-terrestrial systems \cite{YouLLEO2020}.
In the literature, tailored schemes have been investigated to improve the networks' performance.
In \cite{DiBLEO2019}, the authors proposed a user scheduling scheme to maximize the sum-rate and the number of accessed users by utilizing the LEO-based backhaul.
In \cite{LiYLEO2020}, a joint power allocation and user scheduling scheme was proposed to maximize the network throughput in hierarchical LEO systems with the constraint of transmission delay.
In \cite{WangSLEO2019}, the authors developed a joint resource block allocation and power allocation algorithm to maximize the total transmission rate for LEO systems.
It is worth noting that the resource optimization problems in LEO-terrestrial networks are typically combinatorial and non-convex. 
The conventional iterative optimization methods, e.g., in \cite{DiBLEO2019, LiYLEO2020, WangSLEO2019}, are unaffordable for real-time operations due to their high computational complexity.

\subsection{Related Works: State-of-the-art and Limitations}
Towards an efficient solution, various learning techniques have been studied. 
Compared to supervised learning, reinforcement learning (RL) learns the optimal policy from observed samples without preparing labeled data.  
As one of the promising RL methods, deep reinforcement learning (DRL) adopts deep neural networks (DNNs) for parameterization and rapid decision making.
Recent works have applied RL/DRL for resource management in LEO-terrestrial systems \cite{JHLeeLEODRL2020, HeSLEOQL2020, BDengLEODRL2020}.
In \cite{JHLeeLEODRL2020}, to maximize the achievable rate in LEO-assisted relay networks, a DQN-based algorithm was proposed to make the online decisions for link association.
The authors in \cite{HeSLEOQL2020} adopted multi-agent reinforcement learning to minimize the average number of handovers and improve the efficiency of channel utilization for LEO satellite systems.
In \cite{BDengLEODRL2020}, the authors applied an actor-critic (AC) algorithm to LEO resource allocation, such as beam allocation and power control.
The above RL algorithms in practical LEO systems are limited by the following issue.
That is, the performance of a learning model largely depends on the data originated from the experienced samples or the observed environment, but the wireless environment is highly complex and dynamic.
When network parameters vary dramatically, the performance of the learning models can be degraded.
To remedy this, one has to re-collect a large number of training data and re-train the learning models, which is time-consuming and inefficient to adapt to fast variations \cite{LeiMag}.

To address this issue, a variety of studies focus on how to make the learning models quickly respond to dynamic environments.
Transfer learning applies the knowledge acquired from a source learning task to a target learning task to speed up the re-training process and reduce the volume of the collected new data sets \cite{ShenTL}.
The performance of transfer learning is limited by finding correlated tasks.
Another approach, joint learning, aims at obtaining a single model that can be adapted to dynamic environments by optimizing the loss function over multiple tasks \cite{Osvaldolearning}.
Besides, continual learning can also accelerate the adaptation to the new learning task by adding the experienced data from the previous tasks to the re-training data set, thus avoiding completely forgetting previously learned models \cite{SunCL}.
Joint learning and continual learning might have good learning performance on average but have limited generalization abilities when different tasks are highly diversified \cite{Javed}.
In contrast, meta-learning extracts meta-knowledge and achieves good performance for specific tasks without requiring the related source tasks. 
The authors in \cite{FinnMAML} proposed a model-agnostic meta-learning algorithm (MAML) to obtain the model's initial parameters as meta-knowledge to quickly adapt to new tasks.
In \cite{KateMAMLAC}, an algorithm combining actor-critic with MAML (AC-MAML) was developed to learn a new task from fewer experience data sets.
In \cite{Sungmetacritic}, the authors proposed a promising meta-critic learning framework with better performance than conventional AC and AC-MAML.
In \cite{Wulearntosense}, a meta-learning-based adaptive sensing algorithm was proposed, which determines the next most informative sensing location in wireless sensor networks.
In \cite{Parkml}, meta-learning was applied to find a common initialization vector that enables fast training of an autoencoder for the fading channels.
Most of the meta-learning methods were applied in the areas of pattern recognition \cite{FinnMAML}, robotics \cite{KateMAMLAC, Sungmetacritic}, and physical layer communications \cite{Parkml}.
The considered learning tasks in these works are simple and the action space is small, e.g., \cite{Wulearntosense, Parkml}. 
However, when the learning techniques, e.g., DRL, AC-MAML, or meta-critic learning, are applied to address combinatorial optimization problems in a dynamic LEO-terrestrial network, the action space can be huge and the input-output relationships can become more complex. 
These may degrade the efficiency of the above learning methods.

\subsection{Motivations and Contributions}

Moving beyond state-of-the-art, this paper intends to address the following questions: 
\begin{itemize}
\item Which learning technique can lead to higher performance gain in addressing resource management problems for dynamic LEO-terrestrial networks?
\item How to deal with the huge action space and improve the learning efficiency?
\item How to make the learning solutions more adaptive to dynamic environments? 
\end{itemize}

In this study, we design an enhanced meta-critic learning algorithm (EMCL) for dynamic LEO-terrestrial systems.
To the best of our knowledge, this is arguably the first work to present meta-critic learning to address resource scheduling problems and emphasize the adaptation to non-ideal dynamic environments.
The major contributions are summarized as follows:
\begin{itemize}
\item We design a tailored metric for over-loaded LEO systems with dense user distribution, aiming at serving more users and delivering a higher volume of requested data.
\item We formulate the resource scheduling problem as a quadratic integer programming (QIP) and provides two offline optimization-based benchmarks, i.e., optimal branch and bound (B\&B) algorithm and suboptimal alternating direction method of multipliers-based heuristic algorithm (ADMM-HEU).
\item  Due to the combinatorial nature and the high complexity of the offline solutions, 
we solve the problem from the perspective of DRL by reformulating the problem to a Markov decision process (MDP) to make online decisions intelligently.
\item To enhance the adaptation to dynamic environments, we propose an EMCL algorithm based on the meta-critic framework, where the critic has a good generalization ability to evaluate the actors for new tasks such that the learning agent can adjust the policy timely when the environment changes.
Compared to previous learning methods, the novelty stems from: 1) The tailored design of a  hybrid neural network to extract the features from the current and historical samples; 2)  The integrated Wolpertinger policy to allow the actor to make decisions more efficiently in an exponentially increasing action space. 
\item 
To identify a promising solution, we evaluate the proposed EMCL with other benchmarks in three practical dynamic scenarios, i.e., bursty user demands, dramatically fluctuated channel states, and user departure/arrival. 
The numerical results verify EMCL's effectiveness and fast-response capabilities in adapting to dynamic environments.
\end{itemize} 

The rest of the paper is organized as follows. 
The system model is presented in Section II.
We formulate a resource scheduling problem and develop optimal and suboptimal solutions for performance benchmarks in Section III. 
In Section IV, we model the problem as an MDP and develop an EMCL algorithm.
Numerical results are demonstrated and analyzed in Section V.
Finally, Section VI concludes the paper.

\section{System Model and Problem Formulation}
\subsection{LEO-Terrestrial Network}

\begin{figure}[t] 
\centering
\includegraphics[scale=0.50]{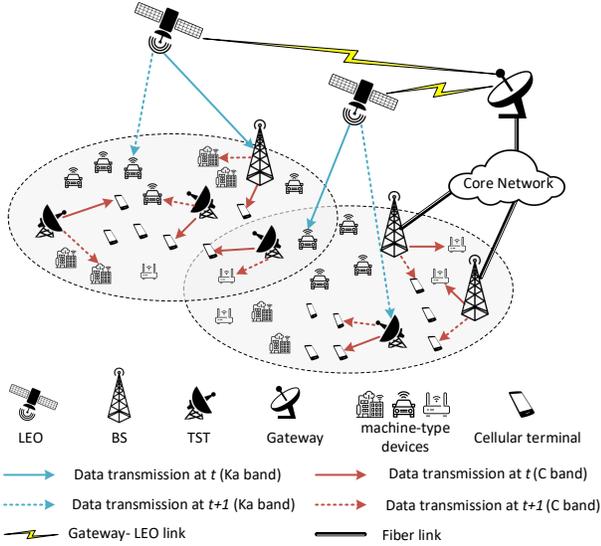}
\caption{An illustrative LEO-terrestrial communication system}
\label{fig:arc}
\vspace{-0.5cm}
\end{figure}

In practice, terrestrial BSs can become over-loaded and congested.
This common issue has received considerable attention from the academia, industry, and standardization bodies, e.g., 3GPP Release 17 \cite{3GPP}.
In this work, we address this challenging issue via developing satellite-aided solutions.
As shown in Fig. \ref{fig:arc}, the BSs with limited resources might not be able to serve all the users and deliver all the requested data demands within a required transmission or queuing delay.
To relieve the burden of the terrestrial BSs, LEO satellites are introduced to offload traffic from BSs or provide backhauling services.
The LEO employs a transparent payload.
The terrestrial communication uses C-band, e.g., sub-6GHz in 5G applications \cite{Mchencband}, whereas the LEO satellite adopts Ka-band to access wider bandwidths since the radio frequency congestion has become a serious issue in the lower bands due to the increased usage of satellites \cite{Oltjon}.
We consider two types of mobile terminals (MTs) that coexist in the system. 
The first type is the normal cellular terminals, e.g., cell phones, that can be served by BSs or terrestrial-satellite terminals (TSTs), typically with high traffic demands. 
The other is the machine-type communication terminals, e.g., massive IoT devices requesting lower data demands, which are equipped with a 3GPP terrestrial-non-terrestrial network (TN-NTN) compliant dual-mode that can be either served by LEO via Ka-band or by BS/TST through C-band \cite{dualmode}.
Compared to conventional cellular BS, TST is a small-size terminal that acts as a flexible and cost-saving access point.
A TST can receive backhauling services from LEO over Ka-band and transmit data to MTs over C-band \cite{DiBLEO2019}.
The terrestrial BSs can request data from the core network through optical fiber links or from the LEO satellites through the BS-LEO link.

We denote $\CK=\{1,...,k,...,K\}$ as the union of ground devices (GDs) served as the receivers, including MTs, BSs and TSTs, and denote $\CN=\{1,...,n,...,N\}$ as the union of transmitters. 
The set $\CN$ can be further divided into two subsets $\CN_1$ and $\CN_2$, where $\CN_1$ consists of the transmitters using Ka-band, i.e., LEO, while $\CN_2$ contains the transmitters occupying C-band, i.e., BSs and TSTs.
We assume that the transmission tasks are delay-sensitive and to be completed within $T$ time slots. 
The time domain is divided by time slots, i.e., $\CT = \{1,...,t,...,T\}$.
In data transmission, each transmitter $n$ serves GD in unicast mode, i.e., no joint transmission and no multi-cast transmission. 
Within a time slot, multiple transmitter-GD links can be activated, forming a link group.
We denote $\CG=\{1,...,g,...,G\}$ as a set by enumerating all the valid link groups.
%Each group consists of multiple transmitter-GD links. 

\subsection{Channel Modeling}
We consider time-varying channels for both satellite and terrestrial communication.
At time slot $t$, the channel state between receiver $k$ and transmitter $n$ can be modeled as:
\begin{align}
h_{k,n,t}= \left\lbrace
\begin{array}{cc}
G^{(T)}_{leo}\cdot G^{(C)}_{k,n,t}\cdot G^{(R)}, & n \in \CN_1, \\
G^{(T)}_{ter}\cdot G^{(C)}_{k,n,t}\cdot G^{(R)}, & n \in \CN_2,
\end{array}
\right.
\end{align}
where $G^{(T)}_{leo}$ and $G^{(T)}_{ter}$ are the transmit antenna gain of LEO and terrestrial BS/TST, respectively.
We assume that all the GDs are equipped with a single receiving antenna, so that their receive antenna gains $G^{(R)}$ are uniform.
$G^{(C)}_{k,n,t}$ represents the channel fading between transmitter $n$ and GD $k$ at time slot $t$.
For LEO-to-GD channel, a widely used channel fading model in \cite {DiBLEO2019, WangSLEO2019, ALSLEO2020} is adopted, which includes free-space path loss, pitch angle fading, atmosphere fading, and Rician small-scale fading:
\begin{align}
\label{eq:GLEOfading}
G^{(C)}_{k,n,t}=\left(\frac{c}{4\pi d_{k,n,t} f_{leo}}\right)^2 \cdot G^{(P)}_{k,n} \cdot A(d_{k,n,t}) \cdot \varphi,
\end{align}
where $c$ is the speed of light, $d_{k,n,t}$ is the propagation distance between LEO and the terminals, $f_{leo}$ is the carrier frequency of LEO, $G^{(P)}_{k,n}$ is the pitch angle fading factor, and $\varphi$ is the Rician fading factor.
$A(d_{k,n,t})$ is the atmospheric loss which is expressed as:
\begin{align}
A(d_{k,n,t}) = 10^{\left( \frac{3\chi d_{k,n,t}}{10H} \right)},
\end{align}
where $\chi$, in $dB/km$, is the attenuation through the clouds and rain, and $H$ is the altitude of LEO.
In downlink transmission, we assume that Doppler shift caused by the high mobility of LEO can be perfectly pre(post)-compensated in the gateway based on the predictable satellite motion and speed \cite{3GPPDoppler}.
For terrestrial channels, i.e., TST/BS-to-MT, $G^{(C)}_{k,n,t}$ consists of the path loss and Rayleigh small-scale fading \cite{GuoKSat2020}, which is given by:
\begin{align}
\label{eq:TEEREfading}
G^{(C)}_{k,n,t}=\left(\frac{c}{4\pi d_{k,n,t} f_{ter}}\right)^2 \cdot \phi,
\end{align}
where $f_{ter}$ is the carrier frequency of TST/BS and $\phi$ is the Rayleigh fading factor.
 
Based on the adopted channel fading models (\ref{eq:GLEOfading}) and (\ref{eq:TEEREfading}), we further model the time-varying channel as the first state Markov channel (FSMC) to capture the time-correlation characteristics and conduct mathematically tractable analysis \cite{FSMC}.
We discretize each channel state $h_{k,n,t}$ into $L$ levels, $\CH=\{h_1,...,h_L\}$, and the transition probability matrix is defined as:
\begin{align}
\mathbf{P}=\left[
\begin{array}{cccc}
P_{1,1}& \cdots& P_{1,L}\\
\vdots& \ddots& \vdots\\
P_{L,1}& \cdots& P_{L,L}
\end{array}
\right],
\end{align}
where an element $P_{l,l'}$ can be written as:
\begin{align}
\label{eq:transitionprob}
P_{l,l'} = \text{Prob}\left[h_{k,n,t+1}\!\!=\!h_{l'}|h_{k,n,t}\!\!=\!h_{l}\right],~h_{l},h_{l'}\in \CH.
\end{align}
That is, at a given time slot $t$, if $h_{k,n,t}=h_l$, $P_{l,l'}$ refers to the probability of channel state at the next time slot $h_{k,n,{t+1}}$ transiting from $h_{l}$ to $h_{l'}$.

\subsection{Optimization Problem}
We formulate a resource scheduling problem for the considered over-loaded LEO-5G systems.
We use binary indicators $\alpha_{k,n,g}$ to represent the activated links in group $g$, where $\alpha_{k,n,g}=1$ if the transmitter-GD link $(n,k)$ is included in group $g$ and will be activated when group $g$ is scheduled, otherwise, 0.
We remark that only valid links, i.e., satisfying the following unicast constraints, can form a link group.
\begin{align}
\label{eq:gpcond01}
&\sum\nolimits_{n\in{\CN}}\alpha_{k,n,g}\leq 1,~\forall~k\in\CK,~g\in\CG,\\
\label{eq:gpcond02}
&\sum\nolimits_{k\in{\CK}}\alpha_{k,n,g}\leq 1,~\forall~n\in\CN,~g\in\CG.
\end{align}
(\ref{eq:gpcond01}) means that each GD $k$ in group $g$ receives data from at most one transmitter, and (\ref{eq:gpcond02}) represents each transmitter $n$ in group $g$ serves no more than one GD.
Confined by (\ref{eq:gpcond01}) and (\ref{eq:gpcond02}), the SINR and the rate of GD $k$ in group $g$ at time slot $t$ are expressed in (\ref{eq:SINR}) and (\ref{eq:RATE}), respectively.
\begin{align}
\label{eq:SINR}
\gamma_{k,g,t} =& \frac{\sum_{n\in\CN_1}h_{k,n,t}\alpha_{k,n,g}p_{k,g}}{\sum_{j\in\CK\setminus k}\sum_{n\in\CN_1}h_{j,n,t}\alpha_{j,n,g}p_{k,g}+\sigma^2}\notag\\
+&\frac{\sum_{n\in\CN_2}h_{k,n,t}\alpha_{k,n,g}p_{k,g}}{\sum_{j\in\CK\setminus k}\sum_{n\in\CN_2}h_{j,n,t}\alpha_{j,n,g}p_{k,g}+\sigma^2},
\end{align}
and
\begin{align}
\label{eq:RATE}
R_{k,g,t} = \Phi B_{k,g} \log_2 (1+\gamma_{k,g,t}),
\end{align}
where $p_{k,g}$ is the transmit power to GD $k$ in group $g$ and $\Phi$ is the duration of each time slot.
We denote $B_{leo}$ and $B_{ter}$ are the fixed bandwidth for LEO and BS/TST, respectively, such that the used bandwidth $B_{k,g}$ for GD $k$ in group $g$ can be calculated by $B_{leo}\sum_{n\in\CN_1}\alpha_{k,n,g} + B_{ter}\sum_{n\in\CN_2}\alpha_{k,n,g}$.
We define the decision variables as $\mathbf{x}=[x_{1,1},...,x_{g,t},...,x_{G,T}]$ where
\begin{align}
x_{g,t}=&\left\lbrace
\begin{array}{ll}
1, & \text{if group } g \text{ is scheduled at time slot } t, \\
0, & \text{otherwise}.
\end{array}
\right.\notag
\end{align}
In the objective design, considering the over-loaded scenario with densely deployed users, the system with limited resources may not be able to satisfy every user's data demand $D_k$ (bits). 
We then consider a composite utility function in (\ref{defy:1}), and define that GD $k$ is served, i.e., $f_k(\mathbf{x})=1$, when a threshold $D_k' (D_k'<D_k)$ is satisfied. 
\begin{align}
\label{defy:1}
f_k(\mathbf{x}) =& \mathds{1}\left(\sum_{t\in\CT}\sum_{g\in\CG}R_{k,g,t}x_{g,t}-D'_k\right),
\end{align}
where $\mathds{1}(\cdot)$ is an indicator function such that 
$\mathds{1}(\beta)=\left\lbrace
\begin{array}{cc}
1,&\text{if~} \beta>0\\
0,&\text{if~} \beta\leq 0\\
\end{array}
\right.$.
We convert the non-linear function $f_k(\mathbf{x})$ to a linear function by introducing auxiliary variables $\mathbf{y}=[y_1,...,y_k,...,y_K]$ and linear constrains (\ref{eq:p12}), where $y_k=f_k(\mathbf{x})$.
The optimization problem is formulated as:
\begin{subequations}
\label{eq:p1}
\begin{align}
\label{eq:p1obj}
\qquad {\bf P1}:
&\min_{x_{g,t},y_{k}}~ f(\textbf{x}, \textbf{y}) = \eta_0  \left(\sum_{k\in\CK}y_{k} - K\right)^2  + \notag\\
& \sum_{k \in \CK}\eta_{k}\left(\sum_{t \in \CT}\sum_{g \in \CG} R_{k,g,t}x_{g,t} - D_{k}\right)^2 &  \\
s.t.~&  
\label{eq:p10}
\bar{\gamma}_{k}\,-\,\gamma_{k,g,t}\leq V\left(1-x_{g,t}\sum_{n\in\CN}\alpha_{k,n,g}\right), \notag\\
\qquad & 
~\forall k\in \CK,g\in \CG,t\in \CT, &\\
\label{eq:p11}
\qquad &  
\sum_{g\in\CG} x_{g,t} \leq 1,~\forall t\in \CT,   &\\
\label{eq:p12}
\qquad &  
D'_ky_{k}\leq \sum_{t\in\CT}\sum_{g\in\CG}R_{k,g,t}x_{g,t},~\forall k\in \CK, &\\
\label{eq:p13}
\qquad &  
x_{g,t}\in \{0,1\},~\forall g \in \CG,~t\in \CT,  &\\
\label{eq:p14}
\qquad &  
y_{k}\in \{0,1\},~\forall ~k\in \CK,  &
\end{align} 
\end{subequations}
where $\bar{\gamma}_k$ is the SINR threshold of GD $k$, $V$ is a positive sufficiently large value, and $\eta_0,...,\eta_K$ are the weight factors.
The GDs with higher priority will be assigned with larger $\eta_k$, e.g., VIP clients subscribing advanced services or the users requesting emergency communications.
The objective (\ref{eq:p1obj}) aims at serving as many GDs as possible and delivering more data to GDs such that the two gaps in the over-loaded system can be minimized.
The first term in the objective is used to keep the user fairness.
The optimal scheduler is encouraged to meet the minimum requirement $D_k'$ to serve more users and gain utilities from the first term.
In the second term, when some users' weights $\eta_k$ are considerable, consuming some resources and satisfying higher demand $D_k$ can be preferred. 
\begin{itemize}
\item The constraints (\ref{eq:p10}) mean that if GD $k$ in group $g$ is scheduled at time slot $t$, i.e., $x_{g,t}\sum_{n\in\CN}\alpha_{k,n,g}=1$, the SINR of GD $k$ should be higher than the threshold $\bar{\gamma}_{k}$.
\item The constraints (\ref{eq:p11}) represent no more than one group can be scheduled in a time slot.  
\item In constraints (\ref{eq:p12}), we define that if GD $k$ is served, i.e., $y_k=1$, the received data should be larger than $D'_k$. 
\end{itemize}

\section{Characterization on Solution Development}
In this section, we propose an optimal method and a heuristic approach as the offline benchmarks for small-medium and large-scale instances, respectively.
In addition, we outline conventional online-learning solutions and their limitations.

\subsection{The Proposed Optimal and Sub-optimal Solutions}
Towards the optimum of ${\bf P1}$, we first identify the convexity of ${\bf P1}$ when the binary variables are relaxed. 
\begin{lemma}
\label{le:convex}
The relaxation problem of ${\bf P1}$ is convex.
\end{lemma}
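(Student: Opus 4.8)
The plan is to establish the two standard ingredients of a convex program: a convex objective and a convex feasible region. First I would isolate the structural fact that drives the whole argument. Once the link groups in $\CG$ have been enumerated, the group-defining indicators $\alpha_{k,n,g}$ and the per-group powers $p_{k,g}$ are \emph{fixed parameters}, not optimization variables; the only decision variables are $x_{g,t}$ and $y_k$. Hence the SINR $\gamma_{k,g,t}$ in (\ref{eq:SINR}) and the rate $R_{k,g,t}$ in (\ref{eq:RATE}) are constants with respect to the decision variables. I expect this observation to be the crux of the proof, since it is exactly what prevents the ratio structure of the SINR from introducing nonconvexity; if the powers or activations were themselves optimized, the claim would generally fail.

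With $R_{k,g,t}$ treated as a constant coefficient, the objective (\ref{eq:p1obj}) becomes a nonnegative-weighted sum of two squared terms. The inner expression $\sum_{k\in\CK} y_k - K$ is affine in $\mathbf{y}$, and for each $k$ the expression $\sum_{t\in\CT}\sum_{g\in\CG} R_{k,g,t}x_{g,t} - D_k$ is affine in $\mathbf{x}$ because its coefficients are constants. The square of an affine function is convex, and with nonnegative weights $\eta_0,\dots,\eta_K$ the objective is a convex quadratic form. This step is routine once the constancy of $R_{k,g,t}$ is accepted.

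For the feasible set, I would check the constraints one at a time after relaxing (\ref{eq:p13})--(\ref{eq:p14}) to the boxes $x_{g,t}\in[0,1]$ and $y_k\in[0,1]$. Rearranging the constant term in (\ref{eq:p10}) gives the linear inequality
\begin{align}
V\Big(\sum\nolimits_{n\in\CN}\alpha_{k,n,g}\Big)x_{g,t}\,\le\,V-\bar{\gamma}_k+\gamma_{k,g,t},
\end{align}
which is affine in $x_{g,t}$; constraint (\ref{eq:p11}) is a linear sum bound; and (\ref{eq:p12}) is linear since $D'_k$ and $R_{k,g,t}$ are constants. The relaxed box constraints are convex, so the feasible region is an intersection of half-spaces and boxes, i.e., a polyhedron, and hence convex. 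Combining the convex objective with the convex feasible set shows the relaxed problem is a convex (quadratic) program, which completes the argument.
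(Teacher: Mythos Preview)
Your proposal is correct and follows essentially the same route as the paper's own proof: both rely on the fact that, once the groups are enumerated, the indicators $\alpha_{k,n,g}$ and powers $p_{k,g}$ are data, so each $R_{k,g,t}$ is a constant and the objective is a nonnegative combination of squares of affine functions, while all constraints in the relaxation are linear. The only cosmetic difference is that the paper expands the quadratics and verifies positive semidefiniteness of the rank-one matrices $\mathbf{E}=\mathbf{1}\mathbf{1}^{\mathrm{T}}$ and $\mathbf{R}=\mathbf{r}_k\mathbf{r}_k^{\mathrm{T}}$ explicitly, whereas you invoke ``square of an affine function is convex'' directly; these are the same argument.
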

\begin{proof}
See Appendix \ref{app:lemma1}.
\end{proof}

Based on Lemma \ref{le:convex}, we conclude that ${\bf P1}$ is an integer convex optimization problem.
The optimum can be obtained by B\&B that solves a convex relaxation problem at each node \cite{bbbook}.
Although the complexity increases exponentially, the B\&B based approach can provide a performance benchmark at least for small-medium instances.

To reduce the complexity in solving the large-scale problems, we develop a suboptimal algorithm.
We observe that ${\bf P1}$ has a variable-splitting structure, which motivates the development of ADMM based approaches \cite{ADMM}.
The algorithm is summarized in Alg. \ref{alg:Alg1}, first solving the convex relaxation problem of ${\bf P1}$ based on ADMM (in lines 2-8), followed by a rounding operation (in lines 9-13).
In ADMM, we divide the relaxed variables into $T+1$ blocks $\hat{\mathbf{x}}_1,...,\hat{\mathbf{x}}_T,\hat{\mathbf{y}}$, where $\hat{\mathbf{x}}_t = [\hat{x}_{1,t},...,\hat{x}_{G,t}]$, and introduce auxiliary variables $\mathbf{z}=[z_1,...,z_K]$, where
\begin{align}
z_k = D'_k\hat{y}_k-\sum_{g\in\CG}\sum_{t\in\CT}R_{k,g,t}\hat{x}_{g,t},&\forall k\in\CK.
\end{align}
The inequality constraints (\ref{eq:p12}) are replaced by:
\begin{align}
z_k \leq 0,&~~\forall k\in\CK.
\end{align}
The augmented Lagrangian function is expressed as:
\begin{align}
\label{eq:augmented}
&L(\hat{\mathbf{x}}_1,...,\hat{\mathbf{x}}_T,\hat{\mathbf{y}},\mathbf{z},\boldsymbol{\lambda}) \notag\\
= & f(\hat{\mathbf{x}}, \hat{\mathbf{y}}) + \sum_{k\in\CK}\lambda_k\left(z_k-D'_k\hat{y}_k+\sum_{g\in\CG}\sum_{t\in\CT}R_{k,g,t}\hat{x}_{g,t}\right) \notag \\
+ &\frac{\rho}{2}\sum_{k\in\CK}\|z_k-D'_k\hat{y}_k+\sum_{g\in\CG}\sum_{t\in\CT}R_{k,g,t}\hat{x}_{g,t}\|^2,
\end{align}
where $\rho>0$ is the penalty parameter and $\boldsymbol{\lambda} = [\lambda_1,...,\lambda_K]$ are the lagrangian multipliers.
We define $I_{iter}$ as the total number of iterations of the algorithm.
In each iteration $i$, ADMM update each variable block as follows (in line 5) and update multipliers (in line 6):
\begin{align}
&\hat{\mathbf{x}}_t^{i+1} = \argmin\limits_{\hat{\mathbf{x}}_t\in\mathcal{X}_t}L(\hat{\mathbf{x}}_1^{i},...,\hat{\mathbf{x}}_T^{i},\hat{\mathbf{y}}^{i},\mathbf{z}^{i},\boldsymbol{\lambda}^{i}),~\forall t\in \CT,\label{eq:updatexhat}\\
&\hat{\mathbf{y}}^{i+1} = \argmin\limits_{\hat{\mathbf{y}}\in\mathcal{Y}}L(\hat{\mathbf{x}}_1^{i},...,\hat{\mathbf{x}}_T^{i},\hat{\mathbf{y}}^{i},\mathbf{z}^{i},\boldsymbol{\lambda}^{i}),\label{eq:updateyhat}\\
&\mathbf{z}^{i+1} = \argmin\limits_{\mathbf{z}\in\mathcal{Z}}L(\hat{\mathbf{x}}_1^{i},...,\hat{\mathbf{x}}_T^{i},\hat{\mathbf{y}}^{i},\mathbf{z}^{i},\boldsymbol{\lambda}^{i}),\label{eq:updatezhat}
\end{align}
where
$\mathcal{X}_t = \{\mathbf{x}_t|(\text{\ref{eq:p10}), (\ref{eq:p11}), (\ref{eq:p12})}\}$, 
$\mathcal{Y} = \{\mathbf{y}|0 \leq y_k \leq 1\}$ and 
$\mathcal{Z} = \{\mathbf{z}|z_k \leq 0\}$.
When ADMM terminates, the continuous solution $\hat{x}_{g,t}$ is obtained in line 8. 
The rounding process is then carried out in lines 10-13 to convert the largest $\hat{x}_{g,t}$ in each time slot to 1 (selecting the most promising group $g$ for each $t$) and keep others 0. 
%an efficient primal-dual interior point method is adopted with the time complexity $\CO\left(o_n^3L\right)$, where $o_n$ is the number of variables, $L$ is the size of the problem \cite{QPcomplexity}.
%With the distributed computation, the total time complexity is 
%$\CO\left(I_{iter} G^3 L\right)$, where $G$ is the largest input size among the sub-problems in (\ref{eq:updatexhat})-(\ref{eq:updatezhat}).

\begin{algorithm}[t]
  \caption{ADMM-HEU}
  \label{alg:Alg1}
  \begin{algorithmic}[1]
  	\STATE \textbf{input:}
  	$D_k$, $D'_k$ and $R_{k,n,t}$.
  	\STATE Relax ${\bf P1}$ to a continuous problem ${\bf P1}'$.
  	\STATE Initialize $\hat{\mathbf{x}}_t^{0}$, $\hat{\mathbf{y}}^{0}$, $\mathbf{z}^{0}$, $\boldsymbol{\lambda}^{0}$ and $i=0$.
	\FOR {$i=0,...,I_{iter}$}
	\STATE Update $\hat{\mathbf{x}}_t, \hat{\mathbf{y}}$ and $\mathbf{z}$ by Eq. (\ref{eq:updatexhat}), (\ref{eq:updateyhat}) and (\ref{eq:updatezhat}).
	\STATE 
	$\lambda_k^{i+1} = \lambda_k^i + \rho\left(z_k^i-D'_k\hat{y}_k^i+\sum\limits_{g\in\CG}\sum\limits_{t\in\CT}R_{k,g,t}\hat{x}_{g,t}^i\right)$.
	\ENDFOR
	\STATE Obtain relaxed solution $\hat{x}_{g,t}$.
	\FOR {$t \in \CT$}
		\STATE Find $g^{\dagger} = \argmax\limits_{g\in\CG}\{\hat{x}_{1,t},...,\hat{x}_{G,t}\}$.
		\STATE Set $x_{g^{\dagger},t}^* = 1$ and $x_{g,t}^* = 0,~\forall g \neq g^{\dagger}$.
   \ENDFOR
	\STATE Calculate $y_k^*$ based on Eq. (\ref{defy:1}).
	\STATE \textbf{output:}
	$x_{g,t}^*$ and $y_k^*$

\end{algorithmic}
\end{algorithm}

The developed ADMM-HEU can provide sub-optimal benchmarks within an acceptable time span, since the subproblems in (\ref{eq:updatexhat})-(\ref{eq:updatezhat}) can be solved in a parallel manner and with a smaller size than the original problem.
However, ADMM-HEU is still an offline algorithm because iteratively solving the subproblems in (\ref{eq:updatexhat})-(\ref{eq:updatezhat}) requires a considerable amount of time, which might not sufficient for fast adaptation to network variations.

\subsection{Conventional Online-Learning Solutions and Limitations}

To enable an intelligent and online solution, we address the problem from an RL perspective.
Firstly, we briefly introduce actor-critic and meta-critic learning approaches as a basis to present the proposed EMCL. 
AC is an RL algorithm that takes advantage of both value-based methods, e.g., Q-learning, and policy-based methods, e.g., REINFORCE, with fast convergent properties and the capability to deal with continuous action spaces \cite{IntroRL}.
The learning agent in AC contains two components, where the actor is responsible for making decisions while the critic is used for evaluating the decisions by the value functions.
Specifically, at each learning step $t$\footnote{In this paper, a learning step corresponds to a time slot.}, the actor takes action based on a stochastic policy, i.e., $a_t \sim \pi(a|s_t)$, where $\pi(a|s_t)$ is the probability of taking an action under state $s_t$, typically following the Gaussian distribution \cite{weiuserschedule}.
The critic is to generate a Q-value function $Q(s_t,a_t)=\mathbb{E}_{\pi}[\bar{r}_t|s_t,a_t]$, where $\bar{r}_t$ is the accumulated reward at step $t$, and $\mathbb{E}_{\pi}[\beta]$ is the expected value of $\beta$ over the policy $\pi$. 
The goal of the learning agent is to find a policy to maximize the expected accumulated reward (or Q-value).

\begin{figure}[h]
\centering
\includegraphics[scale=0.45]{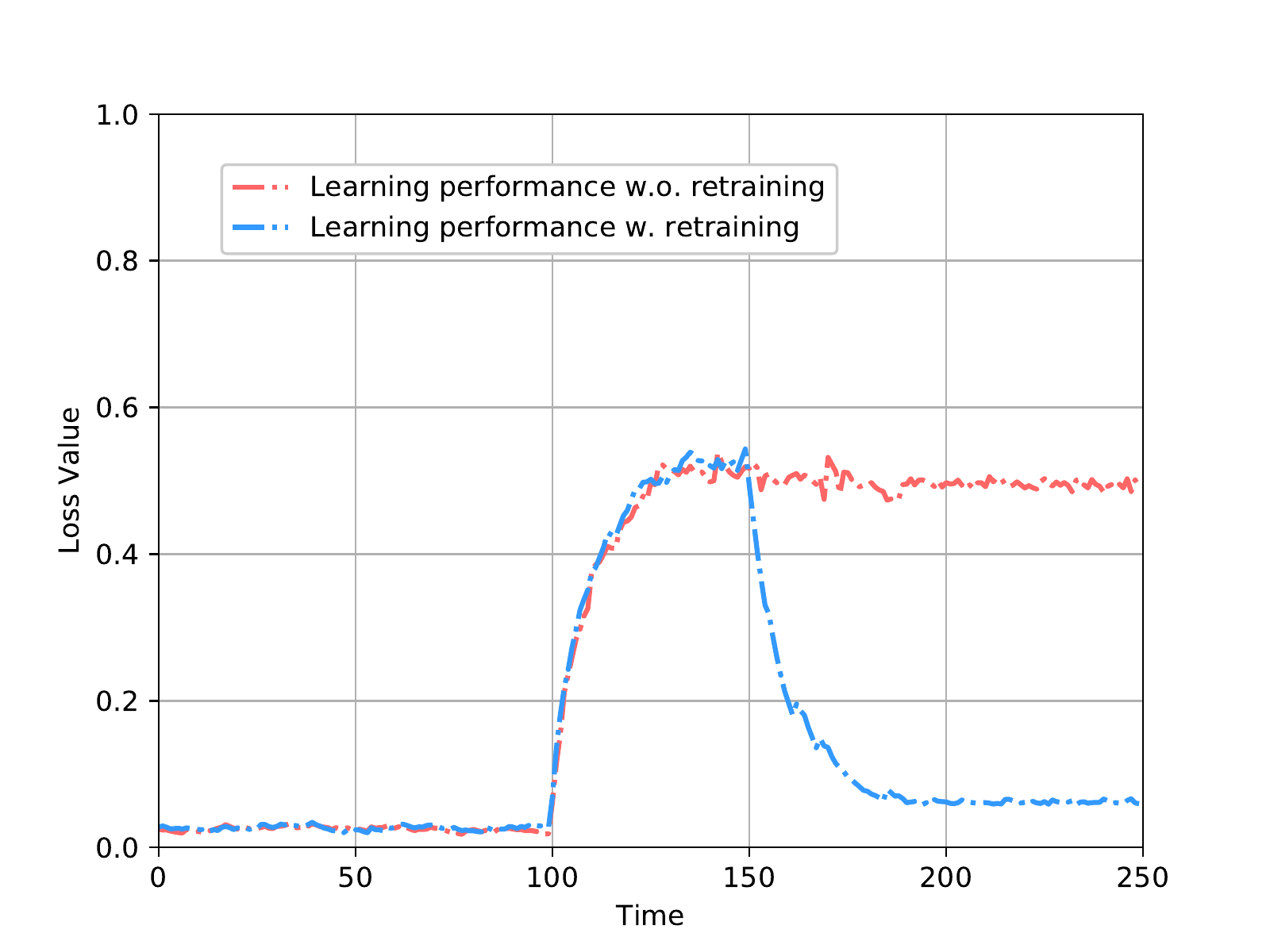}
\caption{Evolution of loss over time-varying demands.}
\label{fig:label04}
\end{figure} 

A critical issue in conventional learning approaches, including AC, is that the performance of a learning model largely depends on the adopted training or observed data sets. 
The practical LEO-5G systems are highly complex and dynamic, such as fast and dramatic variations in channel states, user demands, user arrival/departure, and network topologies.
This typically causes the new inputs to become no longer relevant to the statistical properties of the historical data \cite{conceptdrift}.
As a consequence, the previous learning model can become invalid and the model may need to be re-trained to adapt.
To illustrate this impact, we use Fig. \ref{fig:label04}, as an example, to depict a typical evolution of AC's loss value over time-varying demands.
From 0 to 100 time slots, the demand is time-varying but follows historical statistical properties, e.g., fluctuating within a certain range or following a certain distribution, leading to a well-adapted AC with low and stable loss values.
When a surged demand is generated at the 100-th time slot, the new input deviates the statistics.
The AC model becomes inapplicable to the new environment, evidenced by the rapidly deteriorating loss values.
When the agent in AC consumes a considerable amount of time in new data collection and re-training, the performance can return to the previous level.

To address this issue, meta-critic-based approaches become an emerging technique that takes advantage of a variety of previously observed tasks to infer the meta-knowledge, such that a new learning task can be quickly trained with few observations \cite{FinnMAML}.  
Meta-critic learning combines meta-learning with an AC framework to enhance the generalization ability.
However, conventional meta-critic learning is not effective in dealing with the large discrete space in ${\bf P1}$.
In addition, there is no uniform standard to parameterize the learning model and extract meta-knowledge in dynamic environments.
Thus, we propose an EMCL algorithm to enable an efficient dynamic-adaptive solution.

\section{The Proposed EMCL Algorithm}
In this section, we elaborate the proposed EMCL algorithm, firstly staring from outlining the EMCL framework, then detailing the tailored design.

\subsection{EMCL Framework}

\subsubsection{MDP Reformulation}
First, we reformulate the original problem ${\bf P1}$ as an MDP by defining action, state and reward.
\begin{itemize}
\item As the actor is to select a group from set $\CG$ at each time slot $t$, the action is defined as an assigned link group,
\begin{align}
a_t = g \in \CG.
\end{align}
\item The state consists of the channel coefficients $h_{k,n,t}$, modeled as FSMC with the
transition probability defined in (\ref{eq:transitionprob}), and the delivered data for user $k$ up to time slot $t$, where $b_{k,t} = b_{k,t-1} + R_{k,a_t,t}$.
\begin{align}
s_t=\{h_{1,1,t},...,h_{K,N,t}, b_{1,t},...,b_{K,t}\}.
\end{align}
All possible states are included in the state space $\CS$.
The next state only depends on the current state and action but is irrelevant to the past, which means the state transition from $s_t$ to $s_{s+1}$ follows the Markov property \cite{IntroRL}.
\item The reward is closely related to the objective of ${\bf P1}$.
We define the reward as (\ref{eq:designreward}).
\begin{align}
\label{eq:designreward}
r_t = \sum_{k=0}^{K} \eta_k (\Delta_{k,t-1}^2-\Delta_{k,t}^2),
\end{align}
where $\Delta_{k,t}=
\left\lbrace
\begin{array}{ll}
\sum\limits_{k=1}^K \mathds{1}\left(b_{k,t}-D'_k\right)-K, & k=0,\\
b_{k,t}-D_k, & k\neq 0.
\end{array}
\right.$
Then, the accumulated reward at step $t$ is given by $\bar{r}_t = \sum_{t'=t}^{T}\gamma^{t'-t}r_{t'}$, where $\gamma \in [0,1]$ is a discounted factor.
\end{itemize}
Under the designed MDP, we verify the consistency between the goals of the RL algorithm and the original optimization problem such that the policy provided by the learning agent can minimize the objective in ${\bf P1}$.

\begin{lemma}
\label{le:2} 
When $\gamma = 1$, the objective of the learning agent is equivalent to that of the optimization problem ${\bf P1}$.
\end{lemma}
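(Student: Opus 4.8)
The plan is to show that, with $\gamma=1$, the total accumulated reward collected over one episode differs from the objective $f(\mathbf{x},\mathbf{y})$ of ${\bf P1}$ only by an additive constant and an overall sign, so that maximizing the former is the same as minimizing the latter. First I would write out the accumulated reward from the first step, $\bar r_1 = \sum_{t=1}^{T} \gamma^{t-1} r_t$, set $\gamma=1$, substitute the definition (\ref{eq:designreward}), and exchange the order of the summations over $t$ and $k$. For each fixed $k$ the summand $\Delta_{k,t-1}^2-\Delta_{k,t}^2$ telescopes, leaving only the boundary terms, so that $\bar r_1 = \sum_{k=0}^{K}\eta_k\bigl(\Delta_{k,0}^2-\Delta_{k,T}^2\bigr)$.

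Next I would evaluate the two boundaries. At $t=0$ nothing has been transmitted, so $b_{k,0}=0$ from the recursion $b_{k,t}=b_{k,t-1}+R_{k,a_t,t}$; since $D'_k>0$, every indicator in the $k=0$ term vanishes, giving $\Delta_{0,0}=-K$ and $\Delta_{k,0}=-D_k$ for $k\neq 0$. Hence $\sum_{k=0}^{K}\eta_k\Delta_{k,0}^2 = \eta_0 K^2 + \sum_{k=1}^{K}\eta_k D_k^2 =: C$ is a constant independent of the policy. At $t=T$, the single-group-per-slot constraint (\ref{eq:p11}) lets me write $b_{k,T}=\sum_{t\in\CT}\sum_{g\in\CG}R_{k,g,t}x_{g,t}$, and then identify $\mathds{1}(b_{k,T}-D'_k)$ with $y_k=f_k(\mathbf{x})$ as defined in (\ref{defy:1}). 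This yields $\Delta_{0,T}=\sum_{k}y_k-K$ and $\Delta_{k,T}=\sum_{t}\sum_{g}R_{k,g,t}x_{g,t}-D_k$, whose weighted squares are exactly the two terms of (\ref{eq:p1obj}). Substituting back gives $\bar r_1 = C - f(\mathbf{x},\mathbf{y})$.

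Finally, since $C$ does not depend on the chosen actions, maximizing the expected accumulated reward $\mathbb{E}_{\pi}[\bar r_1]$ is equivalent to minimizing $\mathbb{E}_{\pi}[f(\mathbf{x},\mathbf{y})]$, so the learning agent's goal coincides with the objective of ${\bf P1}$. I expect the only delicate points to be justifying the initial values (in particular using $D'_k>0$ to kill the indicators at $t=0$) and tying the cumulative rate $b_{k,T}$ to the decision variables $x_{g,t}$ through the scheduling constraint, so that the telescoped terminal term is literally $f(\mathbf{x},\mathbf{y})$; the telescoping itself is routine once $\gamma=1$ removes the discount weights.
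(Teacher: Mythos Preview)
Your proposal is correct and follows essentially the same route as the paper: set $\gamma=1$, telescope $\sum_t r_t$ over $k$ to obtain $\sum_{k=0}^{K}\eta_k(\Delta_{k,0}^2-\Delta_{k,T}^2)$, and then identify the terminal term with $f(\mathbf{x},\mathbf{y})$ (the paper leaves the initial boundary as an unspecified constant, whereas you evaluate it explicitly). If anything, your treatment is slightly more careful in justifying $\Delta_{k,0}$ via $D'_k>0$ and in tying $b_{k,T}$ to the $x_{g,t}$ variables through constraint~(\ref{eq:p11}).
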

\begin{proof}
See Appendix \ref{app:lemma2}
\end{proof}

\subsubsection{Meta Critic and Task-Specific Actor}
As shown in Fig. \ref{fig:mc_structure}, we design a hierarchical structure in EMCL containing a meta critic\footnote{In this paper, ``meta-critic learning'' refers to an algorithm that combines AC and meta-learning while ``meta critic'' refers to the critic in the framework.} and multiple actors.
The learning model is trained over multiple tasks, $\CJ^{(1)},...,\CJ^{(I)}$. 
The meta critic can evaluate the actions for any task with a Q-value while each actor provides a stochastic policy dedicated to a specific task.
At time step $t$, $s_t^{(i)}$, $a_t^{(i)}$, and $r_t^{(i)}$ represent the state, action, and reward for task $i$, respectively.
An episode $\CD^{(i)}=\{s_1^{(i)},a_1^{(i)},r_1^{(i)}...,s_T^{(i)},a_T^{(i)},r_T^{(i)}\}$ can be sampled from the first step to the terminal step $T$.
We denote $\CD^{(i)}_{[u, w]}$ as a segment of $\CD^{(i)}$ from step $u$ to $w$, i.e., $\CD^{(i)}_{[u,w]}=\{s_{u}^{(i)},a_{u}^{(i)},r_{u}^{(i)},...,s_{w}^{(i)},a_{w}^{(i)},r_{w}^{(i)}\}$.
\begin{figure}[t]
\centering
\includegraphics[scale=0.58]{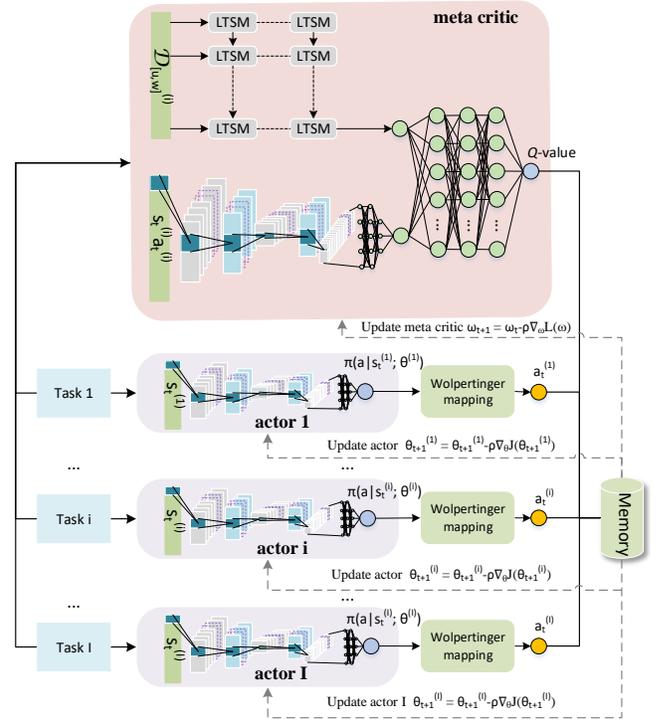}
\caption{The proposed EMCL framework.}
\label{fig:mc_structure}
\end{figure} 
Since the explicit meta critic and actors are difficult to obtain, we adopt the function approximation method.
The meta critic is parameterized as a neural network (NN) with the weights $\boldsymbol{\omega}$, i.e., $Q(s_t^{(i)}, a_t^{(i)}, \CD_{[t-\bar{t},t-1]}^{(i)};\boldsymbol{\omega})$.
We note that, in addition to $s_t^{(i)}$ and $a_t^{(i)}$, the input includes the most recent $\bar{t}$ samples $\CD_{[t-\bar{t},t-1]}^{(i)}$.
Each task-specific actor is modeled as an NN $\pi(a|s_t^{(i)};\boldsymbol{\theta}^{(i)})$ with the weights $\boldsymbol{\theta}^{(i)}$.

To optimize the weights, we minimize the loss functions by gradient descend.
The loss function of the meta critic $L(\boldsymbol{\omega})$ is defined as the average temporal difference (TD) error over all tasks:
\begin{align}
\label{eq:lossmetacritic}
L(\boldsymbol{\omega})=&\frac{1}{I}\sum_{i=1}^{I}\mathbb{E}_{\pi(\boldsymbol{\theta}^{(i)})}\left[(Q(s_{t+1}^{(i)}, a_{t+1}^{(i)}, \CD^{(i)}_{[t-\bar{t}+1,t]}; \boldsymbol{\omega})-r_t\right.\notag\\
&\left.-\gamma Q(s_t^{(i)}, a_t^{(i)}, \CD^{(i)}_{[t-\bar{t},t-1]}; \boldsymbol{\omega})\right]^2,
\end{align}
where the TD error reflects the similarity between the estimated Q-value and actual Q-value.
For the task-specific actor, the loss function $J(\boldsymbol{\theta}^{(i)})$ is the negative Q-value:
\begin{align}
J(\boldsymbol{\theta}^{(i)})=\mathbb{E}_{\pi(\boldsymbol{\theta}^{(i)})}\left[-Q(s_t^{(i)},a_t^{(i)},\CD^{(i)}_{[t-\bar{t},t-1]};\boldsymbol{\omega})\right],
\end{align} 
such that minimizing $J(\boldsymbol{\theta}^{(i)})$ is equivalent to maximizing the expected accumulated reward.
The update rules are given by:
\begin{align}
\label{eq:upmetacritic}
\boldsymbol{\omega}_{t+1} =& \boldsymbol{\omega}_{t} - \rho \nabla_{\boldsymbol{\omega}}L(\boldsymbol{\omega}),\\
\label{eq:upactor}
\boldsymbol{\theta}_{t+1}^{(i)}=&\boldsymbol{\theta}_{t}^{(i)}-\rho \nabla_{\boldsymbol{\theta}^{(i)}}J(\boldsymbol{\theta}^{(i)}).
\end{align}
Based on the fundamental results of the policy gradient theorem \cite{IntroRL}, the gradients of $L(\boldsymbol{\omega})$ and $J(\boldsymbol{\theta}^{(i)})$ are:
\begin{align}
\label{eq:gradmetacritic}
&\nabla_{\boldsymbol{\omega}}L(\boldsymbol{\omega}) = \frac{1}{I}\sum_{i=1}^{I}\left[2L(\boldsymbol{\omega})\nabla_{\boldsymbol{\omega}}(Q(s_{t+1}^{(i)},a_{t+1}^{(i)},\CD^{(i)}_{[t-\bar{t}+1,t]};\boldsymbol{\omega})\right.\notag\\
&\left.-Q(s_t^{(i)},a_t^{(i)},\CD^{(i)}_{[t-\bar{t},t-1]};\boldsymbol{\omega}))\right],\\
\label{eq:gradactor}
&\nabla_{\boldsymbol{\theta}^{(i)}}J(\boldsymbol{\theta}^{(i)})= -Q(s_t^{(i)},a_t^{(i)},\CD;\boldsymbol{\omega})\nabla_{\boldsymbol{\theta}^{(i)}}\log \pi(a|s_t^{(i)};\boldsymbol{\theta}^{(i)}).
\end{align}

\subsubsection{Algorithm Summary}

\begin{algorithm}[t]
  \caption{EMCL}
  \label{alg:Alg2}
  \begin{algorithmic}[1]
  \REQUIRE 
  \STATE \textbf{input}: Multiple task samples; initial $\boldsymbol{\omega}_0$.
	\FOR {each learning episode}
		\STATE Sample $I$ tasks and initialize $\boldsymbol{\omega}_0, \boldsymbol{\theta}^{(1)}_0,...,\boldsymbol{\theta}^{(I)}_0$.
		\FOR {each learning step $t$}
			\FOR {each task $i$}
			\STATE Obtain Q-value by the meta critic in (\ref{eq:mc_Q}).
			\STATE Obtain stochastic policy by the actor in (\ref{eq:sa_pi}).
			\STATE Take actions $a^{(i)}_t$ by the Wolpertinger approach.
			\STATE Store tuples $\{s_t^{(i)},s_{t+1}^{(i)},a_t^{(i)},r_t^{(i)}\}$ in the memory.
			\STATE Take a batch of data and update $\boldsymbol{\theta}^{(i)}$ by (\ref{eq:upactor}). 
	\ENDFOR
	\STATE Update $\boldsymbol{\omega}$ by (\ref{eq:upmetacritic}). 
	\ENDFOR
	\ENDFOR
	\STATE \textbf{output}: The well-trained meta critic $\boldsymbol{\omega}^*$.
	
	\ENSURE 
	\STATE \textbf{input}: A new task; initial $\boldsymbol{\theta}_0$; well-trained meta critic $\boldsymbol{\omega}^*$.
  \FOR {each learning episode}
		\FOR {each learning step $t$}
			\STATE Obtain Q-value by the meta critic in (\ref{eq:mc_Q}).
			\STATE Obtain stochastic policy by the actor in (\ref{eq:sa_pi}).
			\STATE Take an action $a_t$ by the Wolpertinger approach.
			\STATE Store tuples $\{s_t,s_{t+1},a_t,r_t\}$ in the memory.
			\STATE Take a batch of data and update $\boldsymbol{\theta}$ by (\ref{eq:upactor}). 
	\ENDFOR
	\ENDFOR
	\STATE \textbf{output}: The optimal actor $\boldsymbol{\theta}^*$.
  \end{algorithmic}
\end{algorithm}

We summarize the proposed EMCL in Alg. \ref{alg:Alg2}, which includes two phases: the meta training phase and the online learning phase.
For the former, the learning model is trained over different learning tasks.
At each learning episode, we sample $I$ learning tasks.
We obtain the approximated Q-value (in line 6) and stochastic policy (in line 7) by the approximation functions.
The final actions are determined by the Wolpertinger approach in line 8, which will be elaborated in the following subsection.
In line 9, the memory is used to store the experienced learning tuples $\{s_t^{(i)},s_{t+1}^{(i)},a_t^{(i)},r_t^{(i)}\}$.
At each step, we extract a batch of tuples from the memory as the training data for updating $\boldsymbol{\omega}$ and $\boldsymbol{\theta}^{(i)}$ by (\ref{eq:upmetacritic}) and (\ref{eq:upactor}) in line 10 and 12, respectively.
In the online learning phase, given a new task, the well-trained meta critic $\boldsymbol{\omega}^*$ can be directly used to estimate the Q-value and only the actor needs to be re-trained.

\subsection{Tailored Designs in EMCL}

\subsubsection{Parameterization with Hybrid Neural Networks}

There is no uniform standard for parameterization in conventional meta-critic learning.
Considering dynamic environments, the distribution of the new input data and the previous observations may deviate. 
Towards fast adaptation to the dynamic environment, the critic should be able to identify different tasks, where the information for task identification can be refined from the experienced data, which usually forms time-related series \cite{Sungmetacritic}.
The widely used DNN might have limitations in efficiency and in mining features from time-series data due to massive number of weights and feed-forward structure.
In the proposed EMCL, we design tailored neural networks to enable the meta critic and the actors to fit the complex nonlinear relationships and extract the meta-knowledge from historical data.

As shown in Fig. \ref{fig:mc_structure}, for the meta critic, a hybrid neural network (HNN) combing convolutional neural network (CNN), long-short term memory (LSTM), and DNN is applied to learn the features from the current state-action pairs and historical trajectories \cite{goodfellow}.
Thereinto, CNN is computation-efficient via adopting the parameter sharing and pooling operations, and is effective to extract spatial features from the input data.
LSTM, as a type of recurrent neural network, has advantages in extracting features from time-related sequential data.
Thus, in the designed meta critic, the CNN is used to extract a feature from the current action-state pair $s_t^{(i)},a_t^{(i)}$ to evaluate the decisions made by the actor.
The LSTM is adopted for task-identification based on the time-series data $\CD_{[t-\bar{t},t-1]}^{(i)}$.
We denote $f_{cnn}(\mathbf{x};\mathbf{w})$, $f_{lstm}(\mathbf{x};\mathbf{w})$ and $f_{dnn}(\mathbf{x};\mathbf{w})$ as the outputs of CNN, LSTM, and DNN, respectively, which are the functions of input $\mathbf{x}$ and weight $\mathbf{w}$.
The features output from CNN and LSTM are:
\begin{align}
\xi_1 =& f_{cnn}(s_t^{(i)},a_t^{(i)};\boldsymbol{\omega}_{cnn}),\\
\xi_2 =& f_{lstm}(\CD_{[t-\bar{t},t-1]}^{(i)};\boldsymbol{\omega}_{lstm}).
\end{align}
Then, we take the features as input and pass them through a fully-connected DNN to obtain the approximated Q-value:
\begin{align}
\label{eq:mc_Q}
Q^{\pi}(s_t^{(i)},a_t^{(i)},\CD_{[t-\bar{t},t-1]}^{(i)};\boldsymbol{\omega}) = f_{dnn}(\xi_1,\xi_2;\boldsymbol{\omega}_{dnn}).
\end{align}
For the task-specific actors, we adopt CNN as the approximator which takes the current state as the input and outputs the mean $\mu$ and variance $\vartheta^2$ of the stochastic policy.
We assume the stochastic policy follows Gaussian distribution $N(\mu,\vartheta^2)$, such that
\begin{align}
&[\mu, \vartheta^2] =  f_{cnn}(s_t^{(i)};\boldsymbol{\theta}^{(i)}),\\
\label{eq:sa_pi}
&\pi(a|s_t^{(i)};\boldsymbol{\theta}^{(i)}) =N(\mu,\vartheta^2).
\end{align}
%\pi(a|s_t^{(i)};\boldsymbol{\theta}^{(i)}) =\frac{1}{\sqrt{2\bar{\pi}\varphi}}e^{-\frac{(a-u)^2}{2\varphi}}
%where $\bar{\pi}$ refers to the \textit{pi} constant.
 
\subsubsection{Action Mapping with the  Wolpertinger Policy}
The decision variables in $\bf P1$ are discrete such that we need to map the action from the stochastic policy to a discrete action space.
However, the previous action mapping policies in meta-critic learning are not efficient since the action space is large for $\bf P1$.
Thus, in EMCL, the Wolpertinger policy is adopted for faster convergence \cite{Wolper}.

Following the stochastic policy $\pi$, the actor first produces an action $\hat{a}$ with continuous value, i.e.,
\begin{align}
f_{\pi}:\CS\rightarrow\hat{\CA},~~f_{\pi}(s)=\hat{a},
\end{align}
where $f_{\pi}$ is a mapping from the state space $\CS$ to a continuous action space $\hat{\CA}$ under the policy $\pi$.
As the real action space $\CG$ is discrete in $\bf P1$, the following two conventional approaches can be used for discretization \cite{IntroRL}:
\begin{itemize}
\item $\text{Simple approach:}~~a^*_{s}=\argmin\limits_{a\in\CG}|a-\hat{a}|^2.$
\item $\text{Greedy approach:}~~a^*_{g}=\argmax\limits_{a\in\CG}Q(s,a).$
\end{itemize}
The simple approach is to select the closest integer value to $\hat{a}$.
This approach may result in a high probability of deviating from the optimum, especially at the beginning of learning, and further lead to slow convergence \cite{IntroRL}.
The greedy approach optimizes Q-value at each step but the complexity is proportional to the exponentially increasing space $\CG$ \cite{IntroRL}.
To achieve a trade-off between the complexity and learning performance, the Wolpertinger mapping approach is considered.
\begin{itemize}
\item $\text{Wolpertinger approach:}~~a^*_{w}=\argmax\limits_{a\in\CM^*}Q(s,a),$
\end{itemize}
where $\CM^*$ is a subset of $\CG$ and contains $M$ nearest neighbors of $\hat{a}$.
In the Wolpertinger approach, the final action is determined by selecting the highest-scoring action from $\CM^*$.
The Wolpertinger mapping becomes the greedy approach and simple approach when $M=|\CG|$ and $M=1$, respectively, and the solution of simple approach $a^*_{s}$ is included in $\CM^*$. 
\begin{lemma}
\label{le:3}
We assume $\CM^*=\{a_1,...,a_M\}$ and
$\left\lbrace
\begin{array}{ll}
Q(s, a_m)\sim U(Q(s, a^*_{s})-\kappa, Q(s, a^*_{s})+\kappa),&~m\neq m'\\
Q(s, a_m) = Q(s, a^*_{s}),&~m=m',
\end{array}
\right.
$
where $U(a,b)$ refers to uniform distribution and $\kappa$ is a constant, 
then
\begin{align}
\mathbb{E}\left[Q(s,a^*_{w})\right]=Q(s,a^*_{s}) + \kappa\left(1-\frac{2(2^M-1)}{M\cdot 2^M}\right).
\end{align}
\end{lemma}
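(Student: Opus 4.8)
The plan is to reduce the statement to the computation of an expected maximum of independent uniform random variables. Write $c := Q(s,a^*_s)$ for brevity. By assumption, exactly one element of $\CM^*$, namely $a_{m'}$, carries the deterministic value $Q(s,a_{m'}) = c$, while the remaining $M-1$ elements carry independent values $Q(s,a_m) \sim U(c-\kappa,\,c+\kappa)$. Since the Wolpertinger rule takes $a^*_w = \argmax_{a\in\CM^*} Q(s,a)$, we have $Q(s,a^*_w) = \max_m Q(s,a_m)$. Centering each random value, I set $X_m := Q(s,a_m) - c \sim U(-\kappa,\kappa)$ for $m \neq m'$; the deterministic element then contributes a hard floor of $0$, so that $Q(s,a^*_w) - c = \max(0, X_1, \ldots, X_{M-1})$. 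Denoting this quantity by $Y$, the goal becomes to establish $\mathbb{E}[Y] = \kappa\bigl(1 - \tfrac{2(2^M-1)}{M\cdot 2^M}\bigr)$, after which adding back the shift $c$ recovers the claim.

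First I would derive the distribution of $Y$. Because the floor forces $Y \geq 0$, for every $y \in [0,\kappa]$ the event $\{Y \leq y\}$ holds precisely when all $X_m \leq y$; by independence and the uniform CDF $F_X(x) = (x+\kappa)/(2\kappa)$ on $[-\kappa,\kappa]$, this yields $\Pr[Y \leq y] = ((y+\kappa)/(2\kappa))^{M-1}$, with $\Pr[Y \leq y] = 1$ for $y \geq \kappa$. Applying the tail formula $\mathbb{E}[Y] = \int_0^\infty \Pr[Y > y]\,dy$, valid since $Y$ is nonnegative, I would write $\mathbb{E}[Y] = \int_0^\kappa [1 - ((y+\kappa)/(2\kappa))^{M-1}]\,dy$.

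The remaining step is an elementary integral. Substituting $u = (y+\kappa)/(2\kappa)$, so that $dy = 2\kappa\,du$ and the limits run from $u = 1/2$ to $u = 1$, converts the second term into $2\kappa \int_{1/2}^1 u^{M-1}\,du = \tfrac{2\kappa}{M}(1 - 2^{-M}) = \tfrac{2\kappa(2^M-1)}{M\cdot 2^M}$, while the first term integrates to $\kappa$. Subtracting gives $\mathbb{E}[Y] = \kappa\bigl(1 - \tfrac{2(2^M-1)}{M\cdot 2^M}\bigr)$, and restoring the shift produces $\mathbb{E}[Q(s,a^*_w)] = Q(s,a^*_s) + \kappa\bigl(1 - \tfrac{2(2^M-1)}{M\cdot 2^M}\bigr)$, as required. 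As a sanity check, $M=1$ gives $0$ (only $a^*_s$ is present, so $a^*_w = a^*_s$) and $M=2$ gives $\kappa/4$, both of which match direct computation.

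The only genuinely delicate point is the correct accounting of the deterministic element $a^*_s$. It is tempting to compute the expected maximum of $M$ uniform draws, but $a^*_s$ in fact contributes a floor at $0$ rather than a random value; this is exactly why the exponent in the CDF is $M-1$ rather than $M$, and it is what makes the max over $\CM^*$ equal to $\max(0, X_1, \ldots, X_{M-1})$. Getting this bookkeeping right is the crux; the order-statistic CDF and the substitution integral that follow are routine.
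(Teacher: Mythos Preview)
Your proof is correct and follows the same core idea as the paper: identify $Q(s,a^*_w)$ with the maximum of the $M$ values, write the CDF of this maximum as the product of the individual CDFs (which yields the exponent $M-1$ once the deterministic floor from $a^*_s$ is accounted for), and then compute the expectation.

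The one noteworthy difference is in how the expectation is extracted from the CDF. The paper differentiates $F_\Psi$ to obtain a density with an explicit Dirac atom of mass $2^{-(M-1)}$ at $Q(s,a^*_s)$, then integrates $\psi f_\Psi(\psi)$ over the two pieces. You instead center at $c=Q(s,a^*_s)$ so that the shifted maximum $Y$ is nonnegative, and apply the tail formula $\mathbb{E}[Y]=\int_0^\infty \Pr[Y>y]\,dy$. This sidesteps the Dirac-delta bookkeeping entirely and makes the substitution $u=(y+\kappa)/(2\kappa)$ cleaner; the paper's route is a bit heavier algebraically but of course lands on the same answer. Your explicit sanity checks at $M=1,2$ are a nice addition that the paper does not include.
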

\begin{proof}
See Appendix \ref{app:lemma3}
\end{proof}
From Lemma \ref{le:3}, when $M>1$, $\mathbb{E}\left[Q(s,a^*_{w})\right]>Q(s,a^*_{s})$, which means that the Wolpertinger approach finds the actions with higher Q-values than the simple approach at each learning step, and a larger $M$ leads to a higher expected Q-value.
In addition, the complexity of the Wolpertinger approach is lower than the greedy approach as the size of searching space decreases from $|\CG|$ to $|\CM^*|$.
Thus, for the problems with huge discrete spaces, the Wolpertinger approach enables fast convergence to the maximum Q-value with a proper $M$.

\subsection{Complexity Analysis for EMCL}

For the meta critic, an HNN, composed of CNN, LSTM, and DNN, is employed to estimate the Q-value.
We assume CNN includes $V_1$ convolutional layers.
We denote $o_{c,v}$, $o_{k,v}$, $o_{f,v}$ are the number of convolutional kernels, the spatial size of the kernel, and the spatial size of the output feature map in the $v$-th layer, respectively.
The stripe of kernel is 1, and the input size is $o_{c,0} = K(N+1)+1$.
The time complexity of CNN is $\CO\left(\sum_{v=1}^{V_1}o_{c,v-1} \varrho_v \right)$, where $\varrho_v=o_{k,v}^2o_{c,v}o_{f,v}^2$ \cite{CNNcomplexity}.
For the LSTM, we consider $V_{2}$ layers, and denote $o_{l,v}$ and $o_{e,v}$ are the input size and number of memory cells for layer $v$, respectively, where $o_{l,0} = m(K(N+1)+1)$.
The time complexity is given by $\CO\left(\sum_{v=1}^{V_2} o_{e,v}(4o_{l,v-1} + \varsigma_v)\right)$, where $ \varsigma_v = 4o_{e,v}+o_{l,v}+3$ \cite{LSTMcomplexity}.
For the fully-connected DNN, the time complexity is $\CO\left(2o_{d,1}+\sum_{v=2}^{V_3} o_{d,v-1}o_{d,v} \right)$, where $V_3$ is the number of layers of DNN, $o_{d,v}$ is the input size for layer $v$ \cite{TVTYuan}.
For the actor, as the stochastic policy is approximated by a CNN, the time complexity is identical with that of CNN in the meta critic.
Overall, the total time complexity of EMCL is calculated by $\CO\left(K(N+1)L_1+L_2\right)$, where $L_1 = \varrho_1+4mo_{e,1}$ and $L_2=\varrho_1 +o_{e,1}(4m+ \varsigma_1) + 2o_{d,1} + \sum_{v=2}^{V_1}o_{c,v-1} \varrho_v + \sum_{v=2}^{V_2} o_{e,v}(4o_{l,v-1} + \varsigma_v) +  \sum_{v=2}^{V_3} o_{d,v-1}o_{d,v}$.
 When the parameters of the learning model are determined, the complexity increases linearly with ${\bf P1}$'s input size, i.e., $K$ and $N$.

\section{Numerical Results} \label{section:3}

In the simulation, the parameter settings are similar as in \cite{DiBLEO2019, PopescuLinkbuget}. 
The adopted parameters for implementing EMCL are summarized in Table \ref{tab:paraMCDS}.
We compare the performance of the proposed EMCL algorithm with the following five benchmark algorithms: 
\begin{itemize}
\item OPT: optimal solution (B\&B).
\item ADMM-HEU: suboptimal solution (Alg. 1).
\item GRD: a greedy suboptimal algorithm proposed in \cite{GDuserscheduling}.
\item AC-DDPG:  a classic AC algorithm with deep deterministic policy gradient proposed in \cite{davidDDPG}.
\item AC-MAML:  AC with model-agnostic meta-learning proposed in \cite{KateMAMLAC}.
\end{itemize}
The first three provide benchmarks from an optimization perspective, while the last two compare with EMCL from a learning perspective.

\begin{table}[h]
\small
\vspace{-0mm}
\centering
\caption{Parameter setting}
\label{tab:paraMCDS}
\begin{tabular}{|c||c|}
\hline
Total number of GDs in network & 50-100 \\
\hline
Number of transmitters & 1 LEO, 1 BS and 2 TSTs \\
\hline
Time limitation $T$ & 10 time slots \\
\hline
Duration of time slot $\Phi$ & 0.1 s \\
\hline
%Duration of LEO serving area  & 10 minutes  \\
%\hline
Altitude of LEO  & 780 km \\
\hline
%Coverage area of a LEO beam  & 2800 km$^2$ \\
%\hline
%Coverage area of a BS  & 28 km$^2$ \\
%\hline
%Coverage area of a TST  & 0.125 km$^2$ \\
%\hline
Transmit power of LEO & 100 W \\
\hline
Transmit power of BS & 40 W \\
\hline
Transmit power of TST & 2 W \\
\hline
Bandwidth for C-band & 20 MHz \\
\hline
Bandwidth for Ka-band & 400 MHz \\
\hline
Carrier frequency of C-Band & 4 GHz \\
\hline
Carrier frequency of Ka-Band & 30 GHz \\
\hline
Noise power spectral density & -174 dBm/Hz \\
\hline
Parameterized meta critic & HNN \\
\hline
Parameterized actor & CNN \\
\hline
Distribution of stochastic policy & Gaussian \\
\hline
Learning rate & 0.001 \\
\hline
Batch size & 128 \\
\hline
Memory size &  10,000 \\
\hline
Discount factor & 0.9 \\
\hline
Size of search space & \multirow{2}*{10}\\
in Wolpertinger policy &\\
\hline
Environment update interval & 200 time slots\\
\hline
\multirow{2}*{Software platform} &  Python 3.6 with\\
& TensorFlow 1.12.0\\
\hline
\end{tabular}
\end{table}

\subsection{Capability in Dealing with Dynamic Environments}

To verify the capability of the proposed EMCL in dealing with dynamic environments, Fig. \ref{fig:label05}-\ref{fig:label07} compare EMCL with AC-MAML and AC-DDPG in three dynamic scenarios.
In Fig. \ref{fig:label05}, we consider the first scenario with users' irregular access and departure, which can be disruptive to the typical statistical properties.
For instance, the adopted simulator generates user arrivals by following the Poisson distribution as the normal case, while it also periodically generates abnormal events (every 200 slots) with randomly large/small number of arrived users.
%The learning model needs to be adjusted to re-adapt to the new environment.
We update the environment information every 200 time slots.
From Fig. \ref{fig:label05}, both EMCL and AC-MAML are able to converge before each update, but EMCL saves 28.66\% recovery time and reduces 45.42\% objective value than AC-MAML, where we define a recovery time counting from the moment of dramatic performance degradation until the performance recoveries to the normal level.
For AC-DDPG, the convergence performance is inferior to the others, and fails to converge when updating occurs at the 200-th and 600-th time slot.
We remark that the case of user departure is easier to be adapted. 
Fewer users in the system reduce the problem dimension, and thus simplify the learning task, leading to a halved recovery time and flat curves between the 200-th and the 400-th slots in three algorithms. 
In contrast, it is more difficult to deal with the case of user arrival, referring to the large fluctuation after the 400-th slot, mainly due to lacking relevant new-user data and the exponentially increasing dimension.
We can observe that EMCL has strong capabilities in adapting to this difficult case and achieves more performance gains than the other two algorithms.

\begin{figure}[t]
\centering
\includegraphics[scale=0.46]{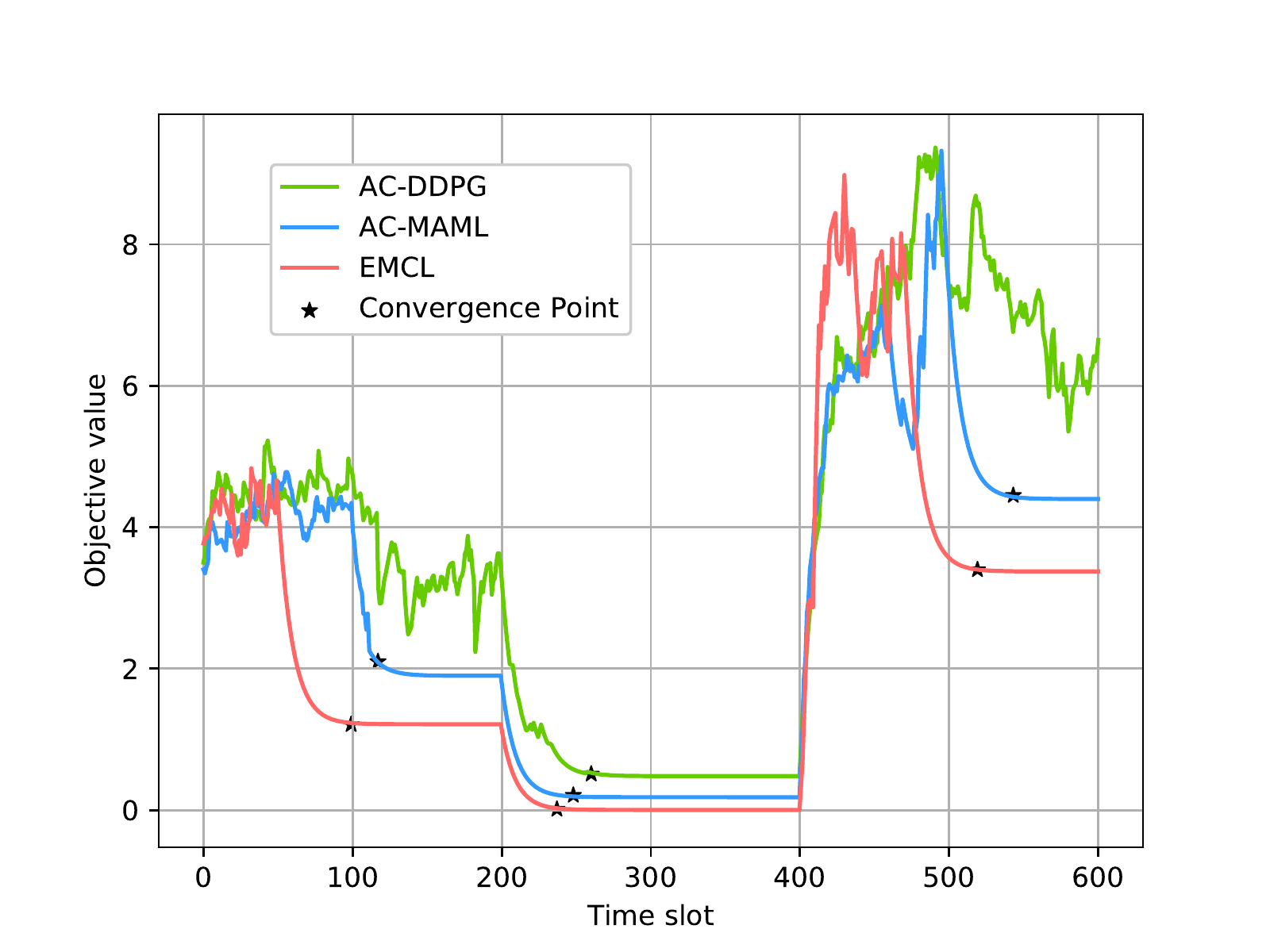}
\caption{Performance in adapting to dynamic scenario 1: \\user entry and leave.}
\label{fig:label05}
\end{figure} 

In Fig. \ref{fig:label06}, we evaluate the algorithms' capabilities in adapting to unforeseen dynamic demands. 
The simulator generates the volume of users' requested data by following the uniform distribution as the normal case.
Then, the distribution changes due to the abnormal bursty demands, e.g., switching from a low-speed voice call to a data-hungry HD video service, or vice versa.
In Fig. \ref{fig:label07}, we consider the channel states can undergo non-ideal large fluctuations, e.g., sharply deteriorated channel conditions due to the large obstacles or the rain/cloud blocks appearing in the transmission path.
Similarly to Fig. \ref{fig:label05}, we collect the updated environment information every 200 time slots.
From Fig. \ref{fig:label06} and Fig. \ref{fig:label07}, AC-DDPG has poor convergence performance, since AC-DDPG needs to re-train the learning model from scratch when the environment changes, leading to a slow adaptation, while EMCL and AC-MAML extract the meta-knowledge from multiple tasks to accelerate the convergence speed.
EMCL re-fits the learning model in a timely manner than AC-MAML.
This is because EMCL uses meta critic to guide the actor to adjust scheduling schemes more effectively in a dynamic environment, and the designed HNN and Wolpertinger mapping approach can improve the learning accuracy and efficiency in large discrete action spaces.

\begin{figure}[t]
\centering
\includegraphics[scale=0.46]{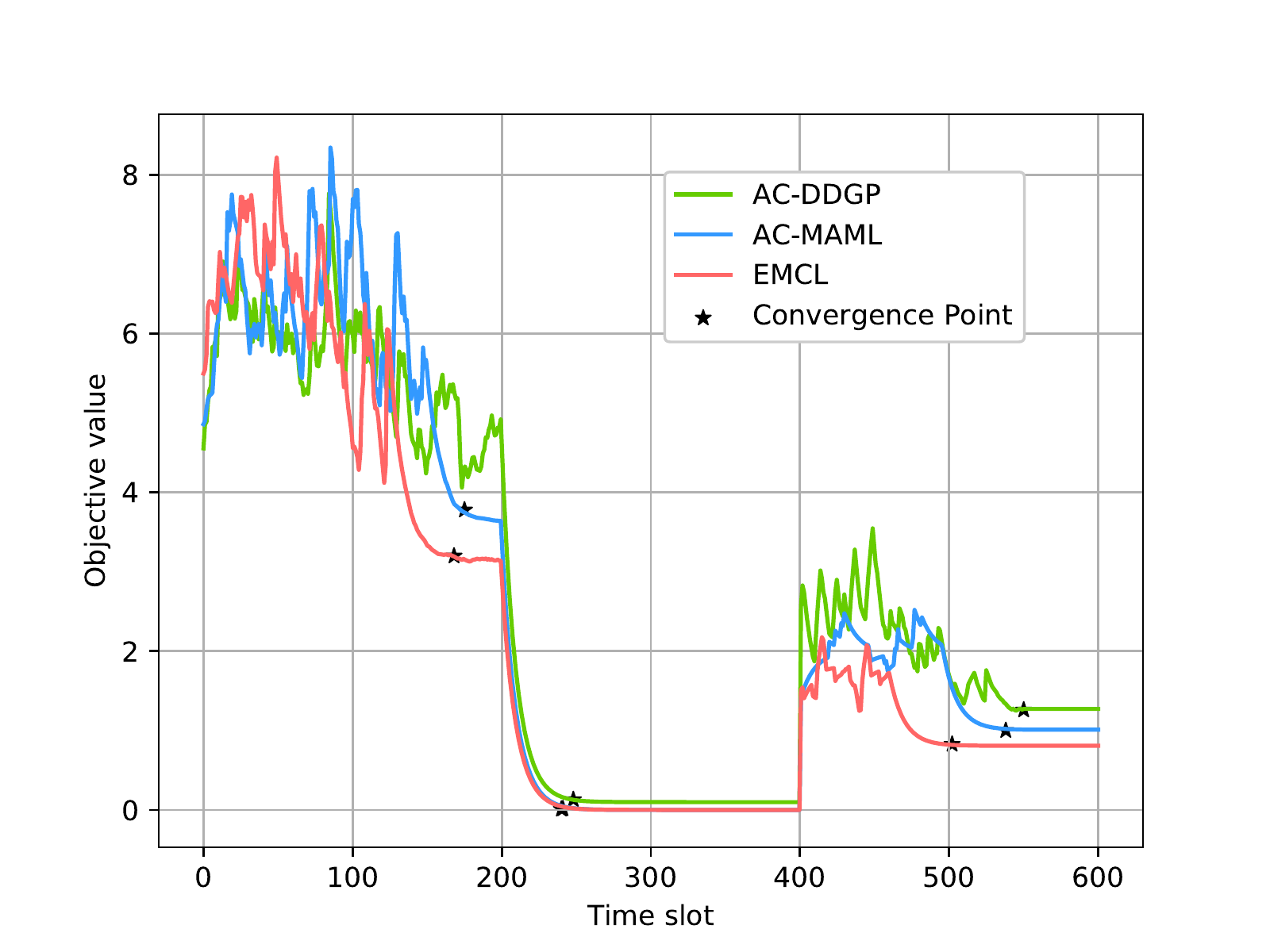}
\caption{Performance in adapting to dynamic scenario 2: \\bursty demands.}
\label{fig:label06}
\end{figure} 

\begin{figure}[t]
\centering
\includegraphics[scale=0.46]{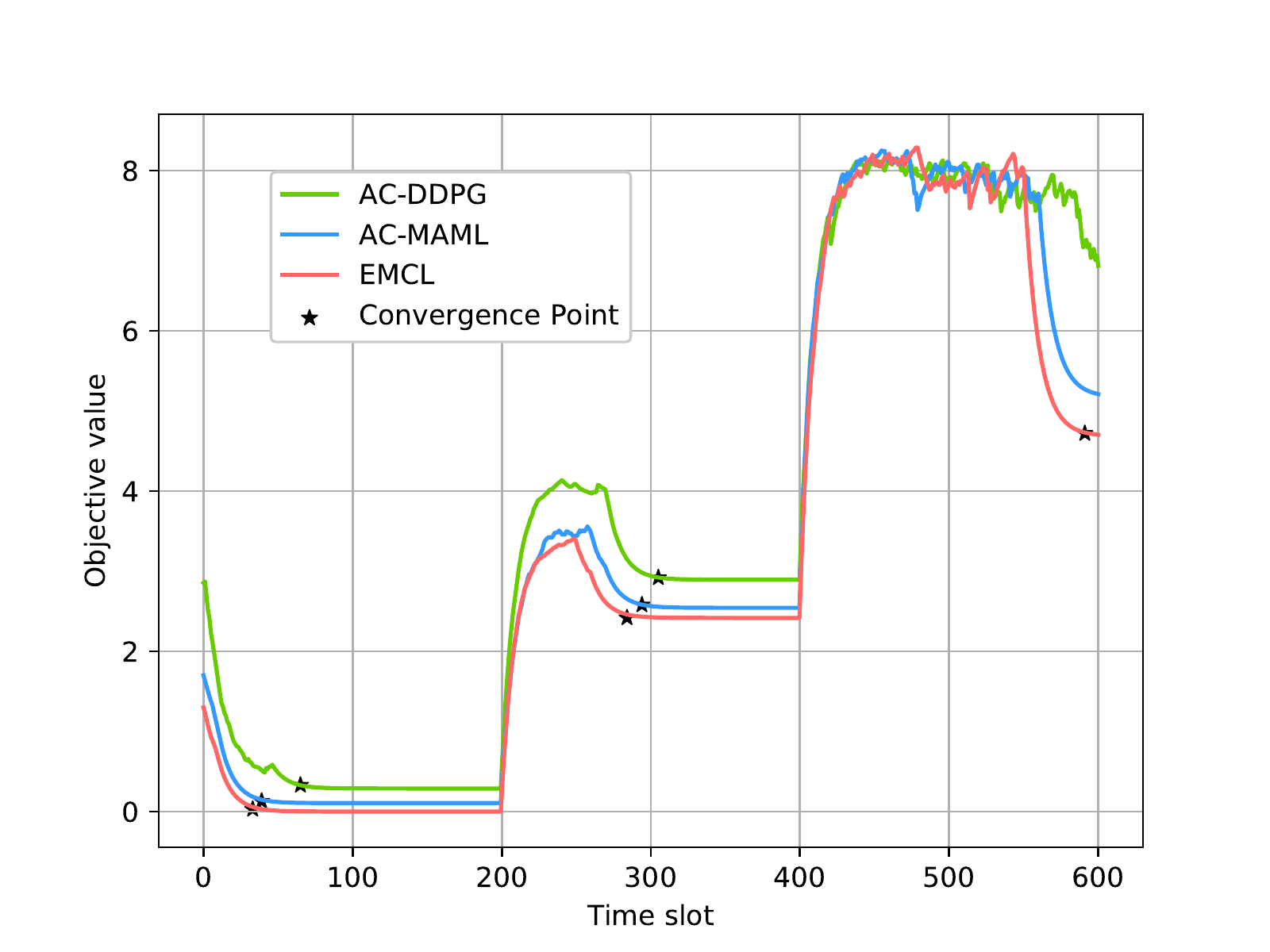}
\caption{Performance in adapting to dynamic scenario 3: \\unforeseen channel variations.}
\label{fig:label07}
\end{figure} 

Fig. \ref{fig:label08} further summarizes the average recovery time with respect to the numbers of GDs based on Fig. \ref{fig:label05}.
In general, the more GDs in the system, the longer the recovery time required to adapt to the new environment.
On average, EMCL saves 29.83\% and 13.49\% recovery time compared to AC-DDPG and AC-MAML, respectively, and the time-saving gain of EMCL becomes even larger when more GDs in the system.
In addition, we compare the EMCL algorithm with and without the Wolpertinger policy to demonstrate the effectiveness of the adopted action mapping method.
The recovery time of the latter is 10.11\% increased than the former but less than AC-DDPG and AC-MAML.
At the convergence, EMCL can decrease the average objective value by 30.36\% compared to EMCL without the Wolpertinger policy.

\begin{figure}[t]
\centering
\includegraphics[scale=0.46]{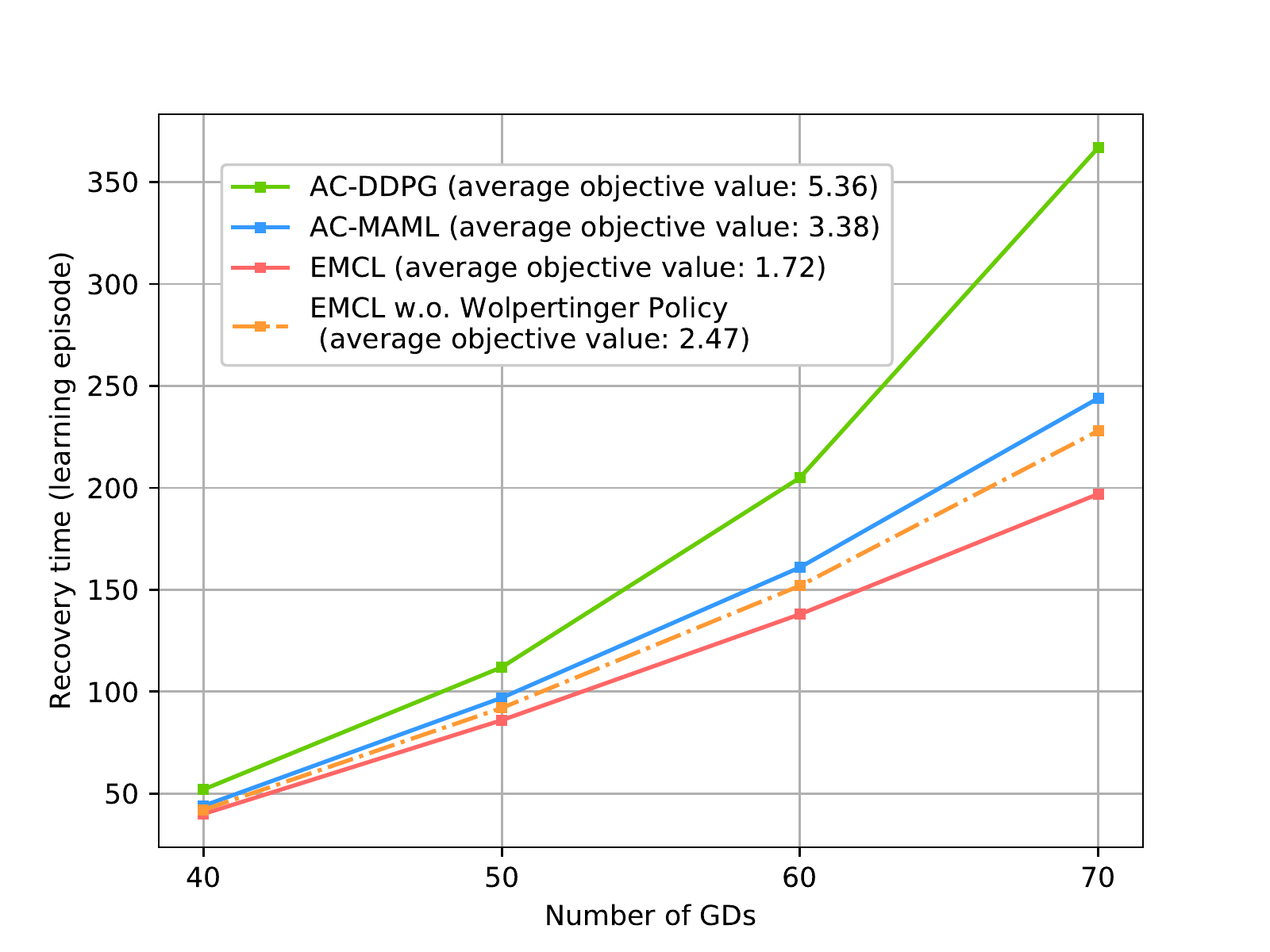}
\caption{Recovery time vs. number of users}
\label{fig:label08}
\end{figure} 

\subsection{Trade-Offs between Computational Time and Optimality}

To demonstrate EMCL's trade-off performance between approaching the  optimum (Fig. \ref{fig:label01}) and computational time (Fig. \ref{fig:label02}), we compare EMCL with five benchmarks.
In Fig. \ref{fig:label01}, we observe 50 environmental information updates and record the average objective values within each update cycle.
For AC-MAML and AC-DDPG, the average gaps to the optimum are 45.26\% and 57.23\%, respectively, while for EMCL, the average gap drops to 27.58\%.
The performance of EMCL is slightly better than ADMM-HEU, around 3.54\%.
For GRD, the average gap to the optimum is 74.15\%, which is inferior to the AC-based algorithms.

\begin{figure}[t]
\centering
\includegraphics[scale=0.46]{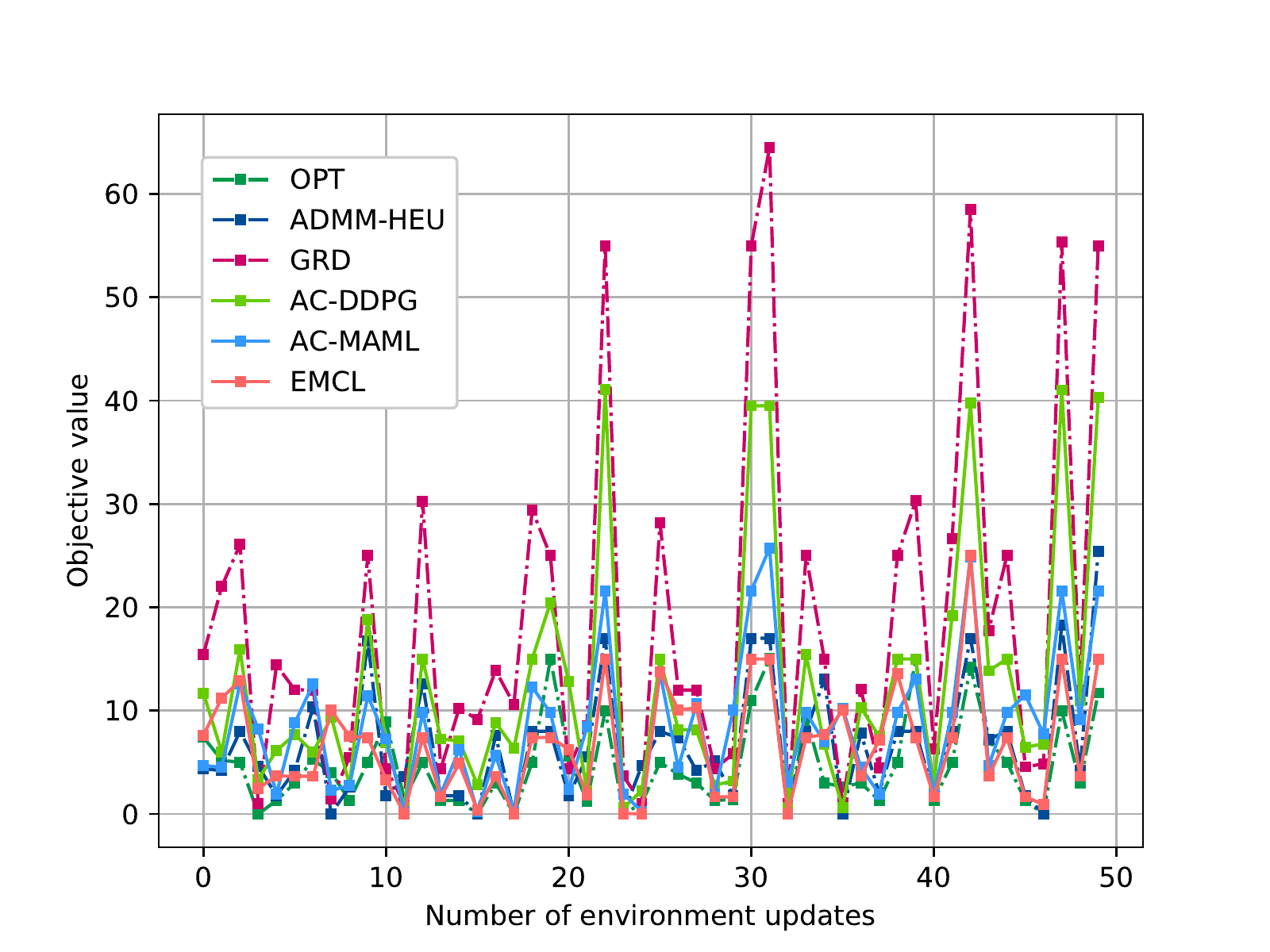}
\caption{Objective value vs. number of updates}
\label{fig:label01}
\end{figure} 

\begin{figure}[t]
\centering
\includegraphics[scale=0.46]{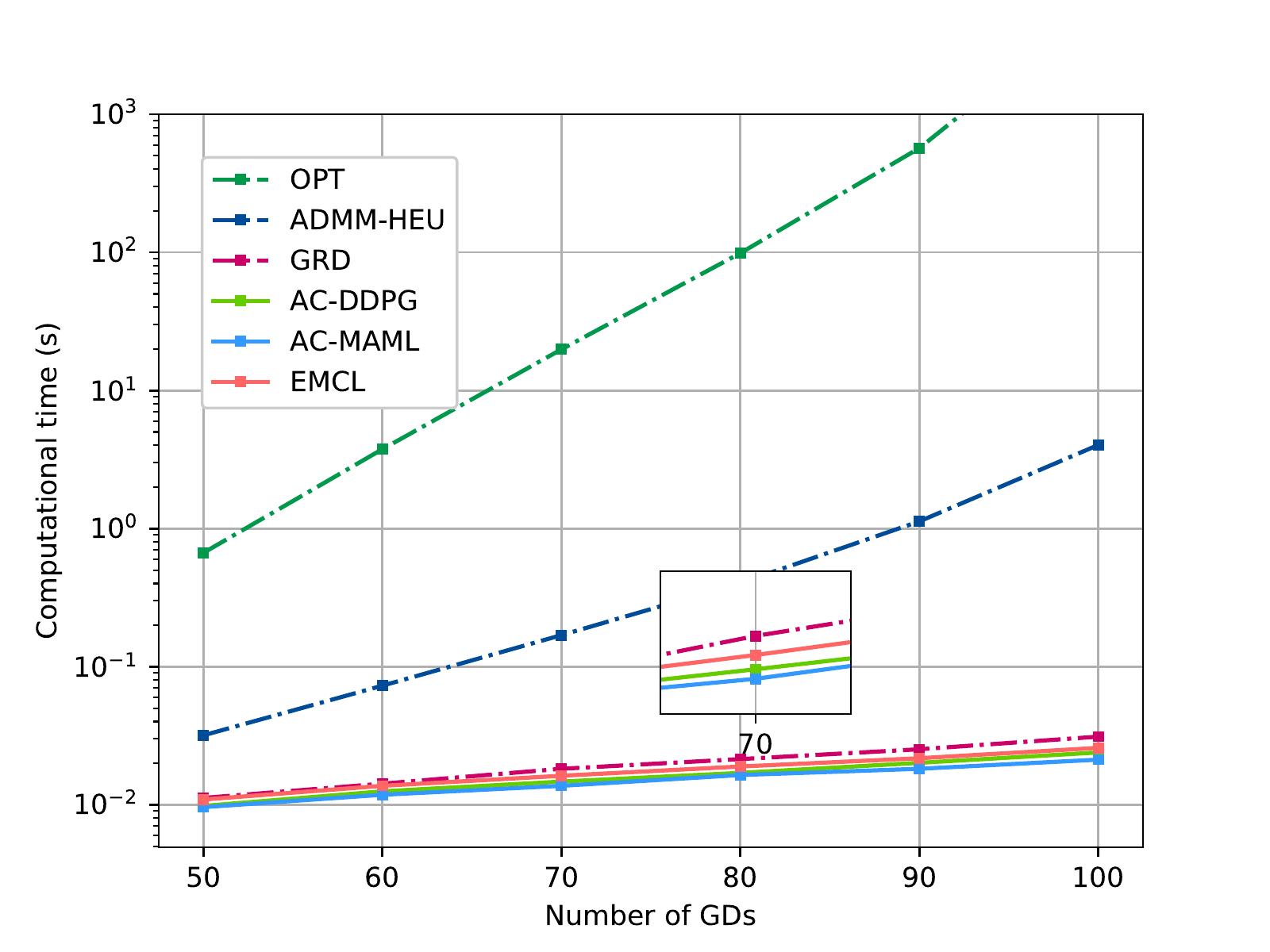}
\caption{Computational time vs. number of users}
\label{fig:label02}
\end{figure} 

Fig. \ref{fig:label02} compares the computational time with respect to the number of GDs.
OPT is the most time-consuming algorithm, as expected.
Compared to OPT, ADMM-HEU saves 98.14\% computational time by decomposing variables into multiple blocks and performing parallel computations.
The computational time in ECML, two AC algorithms, and GRD keep at the millisecond level, but the proposed EMCL achieves smaller gaps to the optimum, hence concludes the better trade-off performance of EMCL than other benchmarks.

\section{Conclusion}
We have investigated a resource scheduling problem in dynamic LEO-terrestrial communication systems to address the mismatch issue in a practical over-loaded scenario.
Due to the high computational time of the optimal algorithm and the proposed ADMM-HEU algorithm, we solve the problem from the perspective of DRL to obtain online solutions.
To enable the learning model to fast adapt to dynamic environments, we develop an EMCL algorithm that is able to handle the environmental changes in wireless networks, such as bursty demands, users' entry/leave, and abrupt channel change.
%In EMCL, based on the AC framework, we approximate meta critic by using an HNN, and apply the Wolpertinger approach for action mapping in a large discrete space, which improves the learning performance on accuracy and efficiency.
Numerical results show that, when encountering an environmental variation, EMCL consumes less recovery time to re-fit the learning model, compared to AC-DDPG and AC-MAML.
Furthermore, EMCL achieves a good trade-off between solutions quality and computation efficiency compared to offline and AC-based benchmarks.
An extension of the current work is to combine other techniques, e.g., continuous learning and behavior regularization, to further improve the sample efficiency and model adaptability.

\begin{appendices}
\section{Proof of Lemma 1}
\label{app:lemma1}
We relax all the binary variables of ${\bf P1}$ to  continuous variables $\hat{\textbf{x}}=[\hat{x}_{1,1},...,\hat{x}_{g,t},...,\hat{x}_{G,T}]^{\text{T}}$ and $\hat{\textbf{y}}=[\hat{y}_1,...,\hat{y}_k,...,\hat{y}_K]^{\text{T}}$, where $\hat{x}_{g,t},\hat{y}_k \in [0,1]$.
The relaxed objective function is written by:
\begin{align}
\label{eq:relaxedobj}
f(\hat{\textbf{x}}, \hat{\textbf{y}}) = \eta_0  \left(\mathbf{1}^{\text{T}}\hat{\textbf{y}} - K\right)^2 + \sum_{k \in \CK}\eta_{k}\left(\textbf{r}_k^{\text{T}}\hat{\textbf{x}} - D_k\right)^2,
\end{align}
where $\mathbf{1}=[1,...,1]^{\text{T}}$ and $\textbf{r}_k=[R_{k,1,1},...,R_{k,g,t},...,R_{k,G,T}]^{\text{T}}$.
%We observe that Eq. (\ref{eq:relaxedobj}) is the sum of $K+1$ quadratic functions.
We expand $ \left(\mathbf{1}^{\text{T}}\hat{\textbf{y}} - K\right)^2$ and $\left(\textbf{r}_k^{\text{T}}\hat{\textbf{x}} - D_k\right)^2$ as follows:
\begin{align}
\label{eq:item1obj}
&\left(\mathbf{1}^{\text{T}}\hat{\textbf{y}} - K\right)^2 = \hat{\textbf{y}}^{\text{T}}\textbf{E}\hat{\textbf{y}}-2D_k\textbf{1}_k^{\text{T}}\hat{\textbf{y}}+K,\\
\label{eq:item2obj}
&\left(\textbf{r}_k^{\text{T}}\hat{\textbf{x}} - D_k\right)^2 = \hat{\textbf{x}}^{\text{T}}\textbf{R}\hat{\textbf{x}}-2D_k\textbf{r}_k^{\text{T}}\hat{\textbf{x}}+D_k^2,
\end{align}
where $\textbf{E}$ is an all-ones matrix and
\begin{align}
\textbf{R} = 
\left[
\begin{array}{cccc}
R_{k,1,1}^2 & R_{k,1,1}R_{k,1,2} & \cdots & R_{k,1,1}R_{k,G,T} \\
R_{k,1,2}R_{k,1,1} & R_{k,1,2}^2 & \cdots & R_{k,1,2}R_{k,G,T} \\
\vdots & \vdots & \ddots & \vdots \\
R_{k,G,T}R_{k,1,1} & R_{k,G,T}R_{k,1,2} & \cdots & R_{k,G,T}^2 \\
\end{array}
\right].
\end{align}
Referring to the theorem of quadratic programming, a quadratic function is convex when its corresponding real symmetric matrix is positive semi-definite \cite{Nocedal}.
According to the definition, $\textbf{E}$ and $\textbf{R}$ are positive semi-definite matrices since, given an arbitrary vector $\textbf{v}=[v_1,...,v_{G\times T}]\neq \textbf{0}$, we can calculate $\textbf{v}^T\textbf{E}\textbf{v} = \left(\textbf{1}^{\text{T}}\textbf{v}\right)^2\geq 0$ and $\textbf{v}^T\textbf{R}\textbf{v} = \left(\textbf{r}^{\text{T}}\textbf{v}\right)^2\geq 0$ \cite{Nocedal}.
Therefore, $f(\hat{\textbf{x}}, \hat{\textbf{y}})$ is convex as it is the summation of $K+1$ convex functions.
Besides, the constraints Eq. (\ref{eq:p10})-(\ref{eq:p12}) are linear, hence the conclusion.

\section{Proof of Lemma 2}
\label{app:lemma2}
The objective of the learning agent is to find a policy $\pi(a|s_t)$ that maximizes the expected accumulated reward $\sum_{t=0}^{T}\gamma^{t}r_t$.
With $r_t$ in Eq. (\ref{eq:designreward}), we expand $\sum_{t=0}^{T}\gamma^{t}r_t$ as:
\begin{align}
\sum_{t=0}^{T}\gamma^{t}r_t =& \sum\limits_{k=0}^{K}\eta_k\left[\gamma \Delta_{k,0}^2 + \sum\limits_{t=1}^{T}(\gamma^{t+1}-\gamma^{t})\Delta_{k,t}^2-\gamma^{T} \Delta_{k,T}^2\right] \notag\\
\overset{\gamma=1}{=}& \sum\limits_{k=0}^{K}\eta_k\left( \Delta_{k,0}^2 - \Delta_{k,T}^2\right)\notag\\
=& \sum\limits_{k=0}^{K}\eta_k\Delta_{k,0}^2 - \eta_0 \left(\sum\limits_{k=1}^K \mathds{1}\left(b_{k,T}-D'_k\right)-K\right)^2 \notag\\
&- \sum\limits_{k=1}^{K} \eta_k\left(b_{k,T}-D_k \right)^2,
\end{align}
where $b_{k,T}=\sum_{t=1}^{T}R_{k,a_t,t}$. 
Thus, we can obtain the optimal policy $a^*_t\sim\pi^*(a|s_t)$ by solving the following problem:
\begin{align}
\label{eq:provlemma1}
\max\limits_{\pi(a|s_t)} -& \mathbb{E}_{\pi(a|s_t)}\left[\eta_0 \left(\sum\limits_{k=1}^K \mathds{1}\left(\sum_{t=1}^{T}R_{k,a_t,t}-D'_k\right)-K\right)^2\right. \notag\\
-& \left.\sum\limits_{k=1}^{K} \eta_k\left(\sum_{t=1}^{T}R_{k,a_t,t}-D_k \right)^2\right],
\end{align}
which is equivalent to the objective Eq. (\ref{eq:p1obj}), thus the conclusion.

\section{Proof of Lemma 3}
\label{app:lemma3}
Denote $Q(s, a_1),...,Q(s, a_M)$ as random variables $X_1,...,X_M$, where 
$X_{m'} = Q(s,a^*_{s})$ and $X_m\sim U(Q(s, a^*_{s})-\kappa, Q(s, a^*_{s})+\kappa),~\forall m\neq m'$.
Thus, $Q(s, a^*_{w})$ can be expressed as a random variable $\Psi=\max\{X_1,...,X_M\}$.
The cumulative distribution function of $\Psi$ is expressed as:
\begin{align}
\label{eq:provlemma201}
F_\Psi(\psi) =& \mathbb{P}[\Psi\leq \psi] = \mathbb{P}[\max\{X_1,...,X_M\}\leq \psi] \notag\\
=& \mathbb{P}[X_1\leq \psi]\mathbb{P}[X_2\leq \psi]...\mathbb{P}[X_M\leq \psi] \notag\\
=& F_{X_1}(\psi)F_{X_2}(\psi)...F_{X_M}(\psi)
\end{align}
For $m\neq m'$, based on the cumulative distribution function of uniform distribution, we can derive:
\begin{align}
\label{eq:provlemma202}
F_{X_m}(\psi) = &\frac{\psi-Q(s, a^*_{s})+\kappa}{2\kappa}, \notag\\
&\psi\in \left[Q(s, a^*_{s})-\kappa, Q(s, a^*_{s})+\kappa\right].
\end{align}
For $m=m'$, as $X_{m'} = Q(s,a^*_{s})$, the cumulative distribution function is:
\begin{align}
\label{eq:provlemma203}
F_{X_{m'}}(\psi) = \mathbb{P}[X_{m'}\leq \psi] = \left\lbrace
\begin{array}{cc}
1, & \psi\geq Q(s,a^*_{s}),\\
0, & \psi < Q(s,a^*_{s}).
\end{array}
\right.
\end{align}
By substituting Eq. (\ref{eq:provlemma202}) and Eq. (\ref{eq:provlemma203}) into Eq. (\ref{eq:provlemma201}),
\begin{align}
F_\Psi(\psi)\!=\!\left\lbrace
\begin{array}{cc}
\!\!\!\left(\frac{y-Q(s, a^*_{s})+\kappa}{2\kappa}\right)^{_{M\!-\!1}},&\!\!\!\!\psi \in [Q(s,a^*_{s}), Q(s, a^*_{s})+\kappa],\\
0, &\!\!\!\!\psi \in [Q(s,a^*_{s})-\kappa, Q(s, a^*_{s})).
\end{array}
\right.
\end{align} 
Then, the probability density function of $\Psi$ can be calculated by solving the first derivative:
\begin{align}
&f_\Psi(\psi) = [F_\Psi(\psi)]' \notag\\ 
= & \left\lbrace
\begin{array}{cc}
\frac{1}{2^{_{M-1}}}\delta(\psi-Q(s,a^*_{s})), & \psi = Q(s,a^*_{s}), \\
\frac{M-1}{2\kappa}\left(\frac{\psi-Q+\kappa}{2\kappa}\right)^{_{M-2}}, & Q(s,a^*_{s})< \psi \leq Q(s, a^*_{s})+\kappa, \\
0, & \text{otherwise},
\end{array}
\right.
\end{align}
where $\delta(\cdot)$ is Dirac function.
The expectation of $\Psi$ is:
\begin{align}
&\mathbb{E}\left[\Psi\right]=\mathbb{E}\left[Q(s,a^*_{w})\right]=\int_{Q(s,a^*_{s})}^{Q(s, a^*_{s})+\kappa}\psi f_\Psi(\psi)d\psi\notag\\
=& \frac{1}{2^{_{M-1}}}\int_{Q(s,a^*_{s})^{-}}^{Q(s, a^*_{s})^{+}}\psi\delta(y-Q(s,a^*_{s}))dy \notag\\
&+\int_{Q(s,a^*_{s})^{+}}^{Q(s, a^*_{s})+\kappa}\psi\frac{M-1}{2\kappa}\left(\frac{\psi-Q(s, a^*_{s})+\kappa}{2\kappa}\right)^{_{M-2}}d\psi \notag \\
=& \frac{Q(s,a^*_{s})}{2^{_{M-1}}} + Q(s,a^*_{s}) + \kappa - \frac{2\kappa}{M} - \frac{Q(s,a^*_{s})}{2^{_{M-1}}} + \frac{\kappa}{M\cdot2^{M-1}}\notag\\
=& Q(s,a^*_{s}) + \kappa\left(1-\frac{2(2^M-1)}{M\cdot 2^M}\right).
\end{align}
Thus the conclusion.
\end{appendices}

\end{document}